\def\final{1}
\newcounter{mylabelcounter}
\newcommand{\labelText}[2]{%
#1\refstepcounter{mylabelcounter}
\immediate\write\@auxout{%
  \string\newlabel{#2}{{1}{\thepage}{{\unexpanded{#1}}}{mylabelcounter.\number\value{mylabelcounter}}{}}
}%
}
\newcommand{\Z}{\mathbb{Z}}
\newcommand{\Q}{\mathbb{Q}}
\newcommand{\Vm}{V \setminus \{t\}}
\newcommand{\VN}{V^-}
\newcommand{\VP}{V^+}
\newcommand{\VZ}{V^0}
\DeclareMathOperator{\Ex}{Ex}
\DeclareMathOperator{\Deficit}{Def}
\newcommand{\netflow}[2]{\nabla #1_{#2}}
\newcommand{\Rinf}{\bar\R}
\algnewcommand{\LineComment}[1]{\Statex \hskip\ALG@thistlm \(\triangleright\) \emph{#1}}
\newtheorem{theorem}{Theorem}[section]
\newtheorem{definition}[theorem]{Definition}
\newtheorem{lemma}[theorem]{Lemma}
\newtheorem{claim}[theorem]{Claim}
\newtheorem*{claim*}{Claim}
    \newcommand{\todo}[1]{}
    \newcommand{\question}[1]{}
    \newcommand{\nnote}[1]{}
    \newcommand{\lnote}[1]{}
    \newcommand{\modified}[1]{#1}
    \newcommand{\edited}[1]{#1}
\colorlet{darkred}{red!50!black}
    \newcommand{\todo}[1]{{\color{darkred}\em TODO: #1}}
    \newcommand{\question}[1]{{\color{darkblue}\emph{(#1)}}}
    \newcommand{\nnote}[1]{\begingroup\color{blue!50!black}\em (Neil: #1)\endgroup}
    \newcommand{\lnote}[1]{\begingroup\color{purple!50!black}\em (Laci: #1)\endgroup}
    \newcommand{\modified}[1]{{\color{green!45!black}#1}}
    \newcommand{\edited}[1]{{\color{green!45!black}#1}}
\newcommand{\RecursiveGenFlow}{{\sc Recursive-Generalized-Flow}}
\newcommand{\R}{\mathbb{R}}
\DeclareMathOperator{\supp}{supp}
\newcommand{\globpot}{\overline{\Ex}}
\newcommand{\apot}{\Phi}
\newcommand{\npot}{\Psi}
\newcommand{\fs}{\bar{f}}
\newcommand{\mus}{\bar{\mu}}
\newcommand{\Sc}{S}
\newcommand{\Tc}{T}
\newcommand{\cI}{\mathcal{I}}
\newcommand{\Vng}{W}
\newcommand{\Vg}{X}
\newcommand{\Vt}{Y}
\newcommand{\Vr}{Z}
\newcommand{\fg}{f^{\Vg}}
\newcommand{\ft}{f^{\Vt}}
\newcommand{\fr}{f^{\Vr}}
\newcommand{\mut}{\mu^Y}
\newcommand{\mur}{\mu^Z}
\newcommand{\fri}{\tilde{f}}
\newcommand{\muri}{\tilde{\mu}}
\newcommand{\aux}{o}
\newcommand{\sigmab}{\sigma_{\aux}}
\newcommand{\Vaux}{\hat{V}}
\newcommand{\Eaux}{\hat{E}}
\newcommand{\gammaaux}{\hat{\gamma}}
\newcommand{\gaux}[1]{\gammaaux_{\aux #1}}
\newcommand{\rpref}[1]{(R\ref{#1})}
\newcommand{\instance}{\mathcal{I}}
\newcommand{\ole}[1]{\accentset{\leftarrow}{#1}}
\newcommand{\obe}[1]{\accentset{\leftrightarrow}{#1}}
\newcommand{\rndgain}{\theta}
\begin{document}

\title{A Simpler and Faster Strongly Polynomial Algorithm for  Generalized Flow Maximization\thanks{A preliminary version of this paper has appeared in the \emph{Proceedings of the 49th Annual ACM SIGACT Symposium on Theory of Computing}, pages 100--111.}}
\author[1,2]{Neil Olver\thanks{Supported by an NWO Veni grant, and NWO Vidi grant 016.Vidi.189.087.}}
\author[1]{L\'aszl\'o A. V\'egh\thanks{Supported by EPSRC First Grant EP/M02797X/1, and by  ERC
  Starting Grant ScaleOpt--757481.}}
\affil[ ]{ \texttt{\{n.olver,l.vegh\}@lse.ac.uk}}
\affil[1]{London School of Economics and Political Science, London, UK}
\affil[2]{Centrum Wiskunde \& Informatica, Amsterdam, The Netherlands}

\maketitle

\begin{abstract}
    We present a new strongly polynomial algorithm for generalized flow maximization that is significantly simpler and faster than the previous strongly polynomial algorithm \cite{laci}.
\edited{For the uncapacitated problem formulation, the  complexity bound $O(mn(m+n\log n)\log (n^2/m))$ improves on the previous estimate by almost a factor $O(n^2)$. }
Even for small numerical parameter values, our running time bound is comparable to the best weakly polynomial algorithms.
The key new technical idea is relaxing the primal feasibility conditions.  This allows us to work almost exclusively with integral flows, in contrast to all previous algorithms for the problem.

\end{abstract}


\section{Introduction}
In the maximum generalized flow problem, we are given a directed graph $G=(V,E)$ with a sink node $t\in V$. Every arc $e\in E$ has a positive capacity and a positive gain factor $\gamma_e> 0$. 
Flow entering an arc $e$ gets rescaled by the factor $\gamma_e$ when traversing the arc;  
the goal is to maximize the amount of flow sent to the sink. 
The generalized flow model dates back to Kantorovich's seminal 1939 paper~\cite{kantorovich}, the same paper that formally introduced linear programming.

 Generalized flow networks can model transportation of a commodity through a network, such as a liquid or gas through pipelines, where loss is experienced.
 Nodes can also represent different types of entities that can be converted into each other at certain conversion rates. 
 For example, in financial networks, the nodes can represent various equities, and 
 arcs correspond to possible trades.
Generalized flows can also be used to model generalized assignment problems, such as assigning raw materials to final products where gain factors can encode the efficiency of the processing.
We refer the reader to Ahuja et al.~\cite[Chapter 15]{amo} for further applications of the model.

Early combinatorial algorithms were developed by Dantzig~\cite{dantzig63} and by Onaga~\cite{onaga66}.
 The first polynomial-time combinatorial algorithm was given by Goldberg, Plotkin, and Tardos~\cite{Goldberg91} in 1991.  A large number of weakly polynomial algorithms were developed in the subsequent 20 years~\cite{Cohen94,Fleischer02,Goldfarb96,Goldfarb02a,Goldfarb97,Goldfarb02,Kapoor96, Radzik04,Restrepo09,Tardos98,Vaidya89,Vegh11,Wayne02}.
Let $n$ denote the number of nodes,  $m$ the number of arcs of the graph, and let be $B$ be the largest integer in the description of the gain factors, capacities, and node demands.
Among the previous algorithms, the best  running times are the $O(m^{1.5}n^2\log (nB))$ interior point method  by Vaidya~\cite{Vaidya89}; and the $O(mn(m+n\log n)\log B)$ combinatorial algorithm by Radzik~\cite{Radzik04}.
Interior point methods can obtain fast approximate solutions for lossy networks, i.e. if $\gamma_e\le 1$ for all arcs.
The result of Daitch and Spielman~\cite{daitch} finds an additive $\varepsilon$-approximate solution in $\tilde O\left(m^{3/2}\log^2(B/\varepsilon)\right)$, recently improved by Lee and Sidford~\cite{lee2014} to 
$\tilde O\left(m\sqrt{n}\log^{O(1)}(B/\varepsilon)\right)$.\footnote{The notation $\tilde O(.)$ hides further polylog$(m)$ factors.}
 However, these results do not obtain an exact solution.

Resolving  a longstanding open question, the first strongly polynomial algorithm was given by V\'egh~\cite{laci}, with running time $O(n^3m^2)$ for the uncapacitated variant of the problem.\footnote{The problem can be formulated in several equivalent forms, as detailed in Section~\ref{sec:prelim}. The running times above were quoted for the standard capacitated form; in this form, \cite{laci} yields $O(m^5)$.}
The main progress in the algorithm is that, within a strongly polynomial number of steps, at least one arc can be identified that must be tight in every dual optimal solution. 
Consequently, the size of the instance can be reduced by contracting such arcs. 
The algorithm is based on \emph{continuous scaling}, a novel version of the classical scaling method. 
The algorithm is technically very complicated. 

We give a new algorithm for generalized flow maximization that improves on \cite{laci} both in terms of speed and simplicity.
Our new algorithm works along broadly similar lines, and also involves arc contractions as a main vehicle of progress, with path augmentation and relabelling operations being used to find an arc to contract.
At the same time, our algorithm introduces a number of new conceptual and technical ideas. 

We give a detailed technical overview and comparison at the beginning of Section~\ref{sec:alg}, after having defined the basic notation and concepts. 
Here we briefly highlight a key novelty.
Unlike all previous combinatorial algorithms, we do not maintain a feasible primal solution (i.e., flow).
Instead, we ensure that the dual solution has a certain property that keeps us ``within reach'' of a feasible primal solution that respects certain complementary slackness conditions.
So while our algorithm is a primal-dual algorithm, in a sense it does not keep track of the ``real'' primal but only a proxy for it.
Working with an infeasible primal solution turns out to have major benefits; in particular, we are able to work almost exclusively with integer flows, simplifying matters dramatically.

Our running time bound is $O(mn(m+n\log n)\log (n^2/m))$ for the uncapacitated form, a substantial improvement over V\'egh~\cite{laci}. 
For the capacitated form, we obtain $O(m^2(m+n\log n)\log m)$. \nnote{Reverted my $\log n$ change.}
For uncapacitated instances, our running time is better than the interior point method of Vaidya~\cite{Vaidya89} for arbitrary values of the complexity parameter $B$, and better than Radzik's combinatorial algorithm~\cite{Radzik04} if $B=\omega(n^2/m)$.

\paragraph{The context of strongly polynomial Linear Programming.}  
The currently known polynomial-time algorithms for Linear Programming (LP), such as the ellipsoid and interior point methods, are \emph{weakly polynomial}, as the bound on the number of arithmetic operations depends on the numerical input. 
In contrast, in a \emph{strongly polynomial} LP algorithm, the number of elementary arithmetic operations must be bounded polynomially in the number of variables and the number of constraints. 
Furthermore, the algorithm must be in PSPACE, that is, the numbers occurring in the computations must remain polynomially bounded in the input size.

Finding a strongly polynomial LP algorithm is a major open question: it was listed by Fields medalist Stephen Smale not only as the main unsolved problem in linear programming theory, but  as one of the most important challenges for mathematics in the twenty-first century~\cite{smale1998}.

Consider an LP in the following standard form, with $A\in \R^{n\times m}$, $b\in \R^n$, $c\in \R^m$. 
\begin{equation}\tag{LP}\label{LP}
\begin{aligned}
\min~& c^\top x\\
Ax & =b\\
x &\geq 0.
\end{aligned}
\end{equation}
The most general strongly polynomial computability results are due to Tardos~\cite{Tardos86}, and to Vavasis and Ye~\cite{vavasis96}. 
In these results, the running time only depends on the matrix $A$, but not on the right hand side $b$ or on the cost $c$. 
Tardos~\cite{Tardos86} assumes that $A$ is integer, and obtains a running time $\text{poly}(n, m, \log \Delta)$, where $\Delta$ is an upper bound on the largest subdeterminant of $A$.
In particular, if all entries in $A$ are integers of size poly$(n,m)$, this algorithm is strongly polynomial. 
These are called \emph{``combinatorial LPs''} since most network optimization problems can be expressed with small integer constraint matrices.
Vavasis and Ye~\cite{vavasis96} waive the integrality assumption, replacing $\Delta$ with a more general condition number.

A different, natural restriction on \eqref{LP} is to impose constraints on the nonzero elements. 
Assume that every column of the constraint matrix $A$ has at most two nonzero entries, but these can be arbitrary numbers. 
Let $\mathcal{M}_2(n,m)\subseteq \R^{n\times m}$ denote the set of all such matrices. 
The results~\cite{Tardos86,vavasis96} do not apply for LPs with such constraint matrices. 
It is easy to see that every LP can be equivalently transformed to one with at most three nonzeros per column.

For the dual feasibility problem, that is, finding a feasible solution to $A^\top y\ge c$ for $A\in \mathcal{M}_2(n,m)$, Megiddo~\cite{megiddo83} gave a strongly polynomial algorithm. \nnote{changed $(m,n)$ to $(n,m)$.}
In fact, the notion of strongly polynomial algorithms was formally defined in the same paper (called ``genuinely polynomial''). 
The primal feasibility problem, that is, finding a feasible solution to $Ax=b$, $x\geq0$ for $A\in \mathcal{M}_2(n,m)$, can be reduced to generalized flow maximization~\cite[Section 8]{laci}.
Hence, the algorithm by V\'egh~\cite{laci}, as well as our new algorithm, give strongly polynomial algorithms for primal feasibility.

It remains an important open question to solve the optimization \eqref{LP} for a constraint matrix $A\in \mathcal{M}_2(n,m)$ in strongly polynomial time. This problem reduces to the minimum cost generalized flow problem~\cite{Hochbaum04}. 
As our new algorithm gives a simple and clean solution to flow maximization, we expect that the ideas developed here bring us closer to resolving this problem.

\medskip

The rest of the paper is structured as follows. Section~\ref{sec:prelim} introduces notation, basic concepts and preliminary results. Section~\ref{sec:alg} describes the overall algorithm and its subroutines. Section~\ref{sec:analysis} presents the analysis of the main algorithm. Section~\ref{sec:first-phase} shows how the initial primal and dual solutions can be obtained.
We briefly discuss open directions in Section~\ref{sec:conclusion}.



\section{Problem and preliminaries}\label{sec:prelim}
Let $\R_+$ and $\R_{++}$ denote the nonnegative and positive reals respectively; similarly let $\Z_+$ and $\Z_{++}$ denote the nonnegative and positive integers.
Let $\Rinf=\R\cup \{\infty\}$, and similarly for other cases.
For a vector $x\in \R^I$, $\|x\|_p$ denotes its $p$-norm, and \edited{$\supp(x)=\{i\in I: x_i\neq 0\}$ denotes its support.}
For a $S\subseteq I$, we use $x(S)$ as shorthand for $\sum_{i \in S} x_i$.

Let $(V,E)$ be a simple directed graph, which we assume to be connected in an undirected sense.
Let $n:=|V|$ and $m:=|E|$.  
For an arc set $F\subseteq E$, let  $\ole{F}:=\{ji\colon ij\in F\}$ denote the
set of reversed arcs, and let $\obe{F}:=F\cup \ole{F}$.
For a subset $S\subseteq V$, we let $E[S]$ denote the set of arcs with both endpoints inside $S$. 
\edited{Further, we let $\delta^-_E(S)$ and $\delta^+_E(S)$ denote the set of incoming and outgoing arcs, respectively. If $E$ is clear from the context, we simply write $\delta^-(S)$ and $\delta^+(S)$.}
If $S=\{i\}$, we use the simplified notation $\delta^-(i)$ and $\delta^+(i)$, and also define $\delta(i) := \delta^+(i) \cup \delta^-(i)$.

An instance of the generalized flow problem is given as
$\instance=(V,E,t,\gamma,b)$, where
$t\in V$ is a sink node, $\gamma\in \R_{++}^E$ is the vector of gain factors, 
and $b\in\R^{V \setminus \{t\}}$ is the vector of node demands. 
Let us partition the nodes according to the sign of the demand.
\[
	\VN := \{ i \in \Vm : b_i < 0\}, \quad 
        \VZ := \{ i \in \Vm : b_i = 0\}, \quad 
        \VP := \{ i \in \Vm : b_i > 0\}.
\]
The \emph{net flow} at a node $i$ is defined as 
\[ \netflow{f}{i} := \sum_{e\in \delta^-(i)} \gamma_e f_e - \sum_{e\in \delta^+(i)} f_e. \]
We are ready to formulate the generalized flow maximization problem.
\begin{equation}\tag{P}\label{primal}
\begin{aligned}
 \max \quad & \netflow{f}{t} \\
 \text{s.t.}\quad \netflow{f}{i}&\geq b_i\quad \forall i\in\Vm\\
  f &\geq 0.
\end{aligned}
\end{equation}
The problem can be formulated in multiple equivalent variants. A
formulation commonly used in the literature adds arc capacities and
sets all node demands to zero. All these formulations can be efficiently reduced to
\eqref{primal}. Moreover, every LP in the form $Ax=b, x\geq0$ for $A\in
\mathcal{M}_2(n,m)$ reduces to \eqref{primal} (see \cite[Section 8]{laci}
for the reductions).
The special case when  $\gamma_e=1$ for all $e\in E$ corresponds to
the standard network flow model; we will refer to standard network
flows as \emph{regular flows} to distinguish them from generalized
flows.
The dual program can be equivalently written in the 
following form. 
\begin{equation}\tag{D}\label{dual}
\begin{aligned}
\max\quad & \mu_t\sum_{j\in \Vm}\frac{b_j}{\mu_j}\\
 \text{s.t.}\quad \mu_j&\ge \gamma_{ij} \mu_i \quad \forall  ij \in E\\
  \mu_t&\in \R_{++}\\
 \mu_i&\in\Rinf_{++}\quad \forall i\in\Vm.
\end{aligned}
\end{equation}
The dual variable for node $i$ would be
$\mu_t/\mu_i$. Nodes other than $t$ are allowed to have
$\mu_i=\infty$; this corresponds to dual values 0. 
We are ready to state our main result.
\begin{theorem}\label{thm:main}
There exists a strongly polynomial algorithm, that, for any input
instance $\mathcal{I}=(V,E,t,\gamma,b)$, \modified{decides whether  \eqref{primal} is feasible and bounded, and if that is the case, then returns optimal
solutions to \eqref{primal} and \eqref{dual}. 
The algorithm runs in
 $O(mn(m+n\log n)\log(n^2/m))$ arithmetic operations.}
\end{theorem}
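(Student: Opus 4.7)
The plan is to design an iterative primal--dual algorithm that maintains throughout its execution a dual-feasible label vector $\mu$ together with an integer pseudo-flow $f$ that satisfies complementary slackness against $\mu$ but may violate the primal constraints $\netflow{f}{i} \ge b_i$. Each iteration pushes the pair $(f,\mu)$ closer to optimality; strong polynomiality will be obtained by showing that after a bounded number of such iterations (independent of $b$ and $\gamma$), some arc can be identified as tight in every dual optimum and contracted. With at most $n$ such contractions, the total complexity is the product of $n$, the number of iterations between contractions, and the per-iteration cost.

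First I would set up the relaxed invariant: $\mu_j \ge \gamma_{ij}\mu_i$ for all $ij\in E$, and $f_e > 0$ only if equality holds. The algorithm alternates two operations. (i) Augmentation: push an integer unit of flow along a shortest path in the ``admissible'' subgraph of tight arcs, from a node with insufficient net flow toward either $t$ or an excess node, updating $f$ while preserving integrality. (ii) Relabelling: when no augmenting path exists, rescale $\mu$ on the appropriate side of a cut separating deficient nodes from the rest, so that new arcs become tight without destroying complementary slackness. Dijkstra's algorithm in the admissible graph, with reduced lengths $\log(\mu_j) - \log(\gamma_{ij}\mu_i)$, gives both the augmenting path and the relabelling amount in $O(m + n\log n)$ time per iteration, accounting for that factor in the stated bound.

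The main obstacle is proving that only $O(mn\log(n^2/m))$ iterations elapse before a new tight arc can be contracted. I would introduce a global potential $\Xi$ combining the total deficit $\sum_{i\in \Vm}\max\{0, b_i - \netflow{f}{i}\}$ weighted by the current labels, and track how much it can decrease in a single augmentation or relabelling step. The crucial point is that because $f$ is integral, each augmentation produces a quantized amount of progress, so $\Xi$ cannot remain bounded away from its target for too many rounds on the same dual support. Following the philosophy of~\cite{laci}, I would then argue that whenever $\Xi$ has contracted by the appropriate logarithmic factor, either (a) the current dual labels on some arc $ij$ lie so close to tightness that $ij$ must be tight in every dual optimum, or (b) the algorithm has already reached optimality on the present graph. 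The former identifies a contraction candidate, which can be verified by a local LP-duality check; the latter terminates the algorithm. This is the technically delicate step: pinning down the right potential and proving the implication without resorting to bit-complexity arguments.

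Finally, I would handle termination and primal feasibility. Each contraction reduces $|V|$ by at least one, so there are at most $n$ phases, yielding $O(n\cdot mn\log(n^2/m)\cdot(m+n\log n))$ operations overall, matching the claimed bound. When the algorithm stops, the dual $\mu$ is optimal, but $f$ is only a pseudo-flow on the contracted graph. A single post-processing pass recovers an optimal primal solution: uncontracting the tight arcs and solving the residual demand-feasibility problem on the tight subgraph reduces to a \emph{regular} (gain-factor-one) maximum flow on an acyclic tight structure, which is solvable in the asymptotic time budget already spent. Correctness then follows from complementary slackness between the final $(f,\mu)$ on the original instance.
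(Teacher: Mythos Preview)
Your high-level architecture matches the paper's---maintain a fitting pair with integral relabelled flow, alternate augmentations and relabellings implemented via Dijkstra, and contract arcs that are tight in every dual optimum---but there are two genuine gaps.

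First, your running-time arithmetic does not close. You claim $O(mn\log(n^2/m))$ iterations \emph{per contraction phase}, then multiply by $n$ phases to obtain $O(n\cdot mn\log(n^2/m)\cdot(m+n\log n))$, which is a factor of $n$ too large and does not ``match the claimed bound'' as you assert. In the paper, the easy per-phase analysis gives only $O(mn)$ augmentations (Lemmas~\ref{lem:plentiful-bound} and~\ref{lem:psi-upper}), hence $O(mn^2)$ in total. Reaching $O(mn\log(n^2/m))$ \emph{total} augmentations across all phases requires an amortized argument (Section~\ref{sec:refined}): one tracks the potential $\Psi(\mu)=-\sum_{i\in\VN}b_i^\mu$, shows it increases by roughly one per augmentation, and then bounds the cumulative \emph{decrease} of $\Psi$ caused by all contractions together via a merging/token game. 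Without this amortization you cannot reach the stated complexity.

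Second, two mechanisms central to correctness are missing or misdescribed. (i) You never invoke \emph{safety} of $\mu$---the invariant that some feasible flow fits the current labels. This is what makes the relaxed-feasibility approach work at all: your criterion ``the dual labels on some arc lie so close to tightness'' is not how a contractible arc is found. Instead, once a node $i$ becomes \emph{plentiful} ($|b_i^\mu|\ge 3n(d_i+1)$), safety lets one replace $f$ by a nearby feasible flow $g$ (Lemma~\ref{lem:fix-flow}), which must carry $>3n$ relabelled units on some arc incident to $i$; Lemma~\ref{lem:abundant} then forces that arc to be tight in every dual optimum. Safety is preserved automatically by the relabelling, but the proof (Lemma~\ref{lem:safe}) is nontrivial and depends on the precise choice of the relabelled set. (ii) Your proposed potential, the label-weighted total \emph{deficit}, is the wrong quantity: deficits are not controlled during relabelling and can grow without bound (only the excess is bounded, by Lemma~\ref{lem:xi}). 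The paper's potential $\Psi(\mu)$ depends only on $\mu$ and the sign pattern of $b$, not on $f$; this is exactly what enables the clean additive accounting ``one augmentation costs one unit of $\Psi$.''
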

\lnote{The statement ignored the infeasible and unbounded cases. Is there a more compact way of writing this? Do we want to explicitly give Farkas certificates and infinite rays?  These should be easy to extract.}
Consider the setting where capacity constraints $f\le u$ are added to the
formulation \eqref{primal}. Such a problem can be reduced to the
uncapacitated form \eqref{primal} on an extended network with $n'=n+m$ nodes
and $m'=2m$ arcs, where a new node is added for every arc of the
original network; see \cite[Section 8]{laci} for the precise
reduction. 
\modified{Directly applying Theorem~\ref{thm:main} to this extended network would yield a running time of $O(m^2(m+m\log m)\log m)$ for the capacitated form.
    Because of the specific form of the extended network, the running time of a key shortest path computation can be improved from $O(m + m \log m)$ to $O(m+n \log n)$, as was done by Orlin~\cite{Orlin93}.
    This yields a running time of $O(m^2(m+n\log n)\log m)$.
}
\lnote{What happened with the final $\log m$, why did that also become $\log n$?}
\nnote{Corrected.}

\paragraph{Relabellings.}
We interpret the dual solutions as \emph{relabellings}, the basic vehicle of our algorithm. 
This is a standard technique used in the vast majority of generalized
flow algorithms.\footnote{Relabellings are inverse dual variables
  instead of the dual variables. This usage was introduced by
  \cite{Goldberg91} and has been  the standard formalism in the subsequent literature;
  we adhere to this inverse notation.}
A feasible solution $\mu \in \Rinf_{++}^V$ to \eqref{dual} is called a \emph{feasible labelling}. 
We define 
\[
    f^\mu_{ij} := \frac{f_{ij}}{\mu_i}\quad \forall ij\in E.
\]
The multiplier $\mu_i$ can be interpreted as a change of the unit of measurement at node $i$. 
An equivalent problem instance is obtained by defining
\[ 
    \gamma_{ij}^\mu := \gamma_{ij} \cdot \frac{\mu_i}{\mu_j}, 
        \qquad \qquad
    \netflow{f^\mu}{i}:=\frac{\netflow{f}{i}}{\mu_i}, 
        \qquad \text{and} \qquad 
    b_i^\mu:= \frac{b_i}{\mu_i}. 
\]
We use the convention $\gamma_{ij}^\mu=1$ if 
 $\mu_i=\mu_j=\infty$. 
Then the feasibility of $\mu$ to \eqref{dual} is equivalent to $\gamma^\mu_e \leq1$ for all $e \in E$.
We call an arc $e\in E$ \emph{tight} with respect to $\mu$, if
$\gamma^\mu_e=1$. Let $E^\mu$ and $\obe E^\mu$ denote the set of tight arcs
for $\mu$ in $E$ and in $\obe E$, respectively.

For a flow $f\in \R_{+}^E$, we define the residual graph $(V,E_f)$
with $E_f=E\cup\{ji: ij\in E, f_{ij}>0\}$. 
The latter set of arcs are called \emph{reverse arcs}. 
For a reverse arc $ji$, we define 
$\gamma_{ji}:=1/\gamma_{ij}$.
By increasing (decreasing) $f_{ji}$ by $\alpha$ on a reverse arc $ji\in E_f$, 
we mean decreasing (increasing) $f_{ij}$ by $\alpha/\gamma_{ij}$.

Let us define the \emph{excess} of a node $i$ under $f^\mu$ to be the amount $\nabla f^\mu_i - b^\mu_i$; 
so $f^\mu$ is feasible if all nodes have nonnegative excess.
We also define the total (positive) excess and the total deficit of $f^\mu$ as 
\[
    \Ex(f,\mu):=\sum_{i\in \Vm}\max\left\{\netflow{f^\mu}{i}-b_i^\mu,0\right\} \quad \text{and} \quad
\Deficit(f,\mu):=\sum_{i\in   \Vm}\max\left\{b_i^\mu -\netflow{f^\mu}{i},0\right\}. 
\]
For an arc set
$F\subseteq \obe{E}$, we let $\gamma(F):=\prod_{e\in F} \gamma_e$; $\gamma^\mu(F)$ is
defined similarly. \footnote{This is inconsistent with the shorthand notation $x(S)=\sum_{i \in S} x_i$. The multiplicative convention will be used for gain factors, and the additive one in all other contexts.}
Note that for any cycle $C$, and any finite
labelling $\mu$, we have $\gamma^\mu(C)=\gamma(C)$.

\edited{A {\em path flow} $h$ is a nonnegative vector $h\in \R_{+}^E$ such that $\supp(h)$ is a simple directed path. If $p$ is the starting point and $q$ is the endpoint of the path, then $\netflow{h}{p}\le 0$, $\netflow{h}{q}\ge 0$, and $\netflow{h}{i}= 0$ for $i\in V\setminus \{p,q\}$.}

Given two flows $f$ and $g$, the {\em difference} $f - g$ is the flow on $\obe{E}$ defined by the following.
For each $ij \in E$ where $f_{ij} \geq g_{ij}$, we set $(f-g)_{ij} = f_{ij} - g_{ij}$ and $(f-g)_{ji} = 0$.
For each $ij \in E$ where $f_{ij} < g_{ij}$, we set $(f-g)_{ij} = 0$ and $(f-g)_{ji} = (g_{ij} - f_{ij})/\gamma_{ij}$.

\paragraph{Fitting pairs and optimality.}
\begin{definition}\label{def:fitting}
Let $f\in \R_{+}^E$ and $\mu\in \R^V_{++}$. 
Then $(f,\mu)$ is called a \emph{fitting pair}, 
if $\mu$ is feasible to \eqref{dual}, and $f_e>0$ implies $\gamma^\mu_e=1$. 
\end{definition}
We also say that $f$ \emph{fits} $\mu$, or $\mu$ \emph{fits} $f$. 
Fitting captures a complementary slackness property. It is worth
noting that for a fitting pair $(f,\mu)$, $f^\mu$ is a regular flow.
The definition is equivalent to saying that $\gamma^\mu_e\le 1$ for
every $e\in E_f$.

The definition requires $\mu$ to be finite and feasible to \eqref{dual}, 
but not the feasibility of $f$ to \eqref{primal}. 
In fact, we will allow flows $f\in\R_{+}^E$ in the algorithm that violate the node balance constraints in \eqref{primal}.
In general, there may not exist a finite optimal solution to 
\eqref{dual}.
The following notion allows us to nevertheless work with finite dual
solutions only.
\begin{definition}
    A fitting pair $(f,\mu)$ is called \emph{essentially optimal} if
    $f$ is feasible and $\nabla f_i  = b_i$ for all $i \in V$ that can
    reach $t$ in $E_f$. 
The dual solution $\mu$ is \emph{essentially optimal} if there exists
a flow $f$ such that $(f,\mu)$ is essentially optimal.
\end{definition}
\edited{For an essentially optimal pair $(f,\mu)$, the primal solution $f$ is also optimal. While an essentially optimal dual solution $\mu$ need not itself be optimal, it provides enough information to easily and quickly obtain both an optimal primal and an optimal dual, as stated in the following lemma.
}
\lnote{Reorganised and moved forward comments.}
\begin{lemma}\label{lem:optimality}
Let $\mu$ be an essentially optimal dual solution. Then we can obtain
a flow $f$ such that  $(f, \mu)$ is an essentially optimal fitting
pair via a maximum flow computation. If $(f,\mu)$ is essentially
optimal, then $f$ is an optimal solution to \eqref{primal},
and $\mu'$ is an optimal solution to \eqref{dual}, where $\mu'$ is
defined by $\mu'_i:=\mu_i$ if $i$ can reach $t$ in $E_f$, and $\mu'_i :=
\infty$ otherwise.
\end{lemma}
\begin{proof}
For the first part, consider the graph $(V,E^\mu)$ of tight arcs. We
obtain the network $(V',E')$ by adding a new source node $s$ with a
new arc $si\in E'$ whenever $b_i<0$,
and a new arc $is\in E'$ whenever $b_i>0$. We set lower capacity 0 and upper
capacity $\infty$ for all arcs in $E^\mu$. We set lower capacity 0 and
upper capacity $-b_i^\mu$ for every arc $si\in E'$, and lower capacity
$b_i^\mu$ and upper capacity $\infty$ for every arc $is\in E'$. We
compute a maximum $s$-$t$ flow $h$ in this network, where $t$ is the
sink of the instance. 
Then it is easy to see that if $(f,\mu)$ is
essentially optimal, then $h=f^\mu$ (with a natural extension to the
$is$ and $si$ arcs) is a maximum $s$-$t$ flow in this
network. Conversely, for any maximum $s$-$t$ flow $h$, if we define $f$  as
$f_{ij}=h_{ij}\mu_i$ for every $ij\in E$, then $(f,\mu)$ is an
essentially optimal fitting pair.
 The second part of the lemma is immediate
by complementary slackness.  
\end{proof}
\modified{
Consequently, an essentially optimal dual solution contains more information than an optimal dual solution. Given an optimal dual  solution with infinite values, there is no easy way of obtaining an optimal primal solution. This requires finding a feasible generalized flow on the 
subgraph induced by the nodes with infinite dual optimum value, a problem essentially equivalent to the maximum generalized flow problem.
}

The following property will be crucial in our algorithm. 
\begin{definition}\label{def:safe}
    We say that a feasible labelling $\mu \in \R_{++}^V$ is \emph{safe} if there exists a feasible
solution $f$ to \eqref{primal} such that $(f,\mu)$ is a fitting pair. 
\end{definition}
Safe labellings can be characterized using Hoffman's circulation theorem.
\begin{lemma}\label{lem:safe-cut}
The feasible labelling $\mu \in \R_{++}^V$ is safe if and only if
$b^\mu(Z)\le 0$ for every set $Z\subseteq \Vm$ with $\delta^-_{E^\mu}(Z)=\emptyset$.
\end{lemma}

We will make use of safe labellings via the next lemma.
\begin{lemma}\label{lem:safe-flow}
    Let $(f,\mu)$ be a fitting pair, and assume $\mu$ is safe. 
    Then there exists a feasible flow $g\in \R_+^E$ fitting $\mu$ with
    $\Ex(g,\mu)\le \Ex(f,\mu)$, and $\|f^\mu-g^\mu\|_\infty\le \Deficit(f,\mu)$.
\end{lemma}
\begin{proof}
    Let $g$ be a flow with $\supp(g) \subseteq E^\mu$ and $\nabla g^\mu_i \geq b^\mu_i$ for all $i \in \Vm$, chosen so that $\|f^\mu - g^\mu\|_1$ is as small as possible.
    The safety of $\mu$ implies that such a $g$ can be found.
    We show that $g$ satisfies the requirements of the lemma.

Consider the flow $h := f - g$; or in other words, we consider $h^\mu = f^\mu - g^\mu$, the difference of two regular flows.
Let $H$ denote the support of $h$; then $H\subseteq E_{g}$ and $\ole{H} \subseteq E_{f}$. 

We first note that if $H$ contained a directed cycle, then it would be possible to add some positive amount of this cycle to $g^\mu$ and obtain another feasible flow, contradicting our choice
of $g^\mu$ to minimize $\|f^\mu-g^\mu\|_1$.

Thus $h^\mu$ can be decomposed into a sum of path flows, 
where each path flow begins at a node $p$ with $\nabla f^\mu_p < \nabla g^\mu_p$ and 
ends at a node $q$ with $\nabla f^\mu_q > \nabla g^\mu_q$. This follows, e.g., by \cite{Gondran84}, see also \cite[Theorem 2.6]{Goldberg91}.
If the head $p$ of one of these paths were equal to $t$, or if it satisfied $\nabla g^\mu_p > b^\mu_p$, then we would be able to increase $g^\mu$ by some positive amount along this path, remaining feasible, and decreasing $\|f^\mu - g^\mu\|_1$.
It follows that the total flow carried over all the paths is exactly $\Deficit(f,\mu)$, and hence $\|h^\mu\|_\infty \leq \Deficit(f,\mu)$.
Since $\nabla f^\mu_p \geq \nabla g^\mu_p$ for any node where $\nabla f^\mu_p \geq b^\mu_p$, it follows that $\Ex(g,\mu) \leq \Ex(f,\mu)$.
\end{proof}

\paragraph{Contraction operation.}
The main progress during the algorithm will reduce the instance via contractions.
Given an instance $\instance=(V,E,t,\gamma,b)$ and an arc $e=pq\in E$,
the contracted instance  $\instance/e = (V', E', t', \gamma', b')$ is defined as follows.
\begin{itemize}[-]
    \item $(V',E') = (V,E) / e$ is the simple directed graph obtained
      from contracting $e$. If the contraction creates parallel arcs,
      we only keep one of them. 
        We retain the label $q$ to refer to the contracted image of $\{p,q\}$, so $V' = V \setminus \{p\}$. 
    \item For $ij \in E'$ where $i,j \neq q$, $\gamma'_{ij} = \gamma_{ij}$.
        For $iq \in E'$, we set $\gamma'_{iq}$ either to $\gamma_{iq}$ or $\gamma_{ip}\gamma_{pq}$, 
        depending on whether $ip$ or $iq$ is present in $E$.
        If both exist, we choose the \emph{larger} value \emph{(recall that arcs are uncapacitated; given two parallel arcs of different gains, there is no benefit to sending flow on one with smaller gain)}. 
    \item $b'_i = b_i$ for all $i \in V' \setminus \{q\}$, and $b'_{q} = b_q + \gamma_e b_p$.
    \item $t' = t$, unless $t \in \{p,q\}$, in which case $t' = q$.
\end{itemize}
Note that  contracting $e=pq$ is equivalent to restricting ourselves
to dual solutions in $\instance$ for which the arc $pq$ is
tight \edited{($\mu_p = \mu_q / \gamma_{pq}$), and replacing $\mu_p$ everywhere by $\mu_q / \gamma_{pq}$}.
Assume now we are given a fitting pair $(g,\mu)$ in $\instance$, where
$e=pq$ is tight for $\mu$. We can define
define a natural image $(g/e, \mu/e)$ of $(g,\mu)$ in $\instance / e$,
as follows. We use the convention that $g_{ij}=0$ for all pairs
$(i,j)$ with  $ij\notin E$.
\begin{itemize}[-]
    \item $\mu / e$ is simply the restriction of $\mu$ to $V' = V \setminus \{p\}$. 
    \item 
        For each arc $ij\in E'$ of the contracted instance, 
        \[
            (g/e)_{ij} := \begin{cases} g_{ij} &\text{ if } q \notin \{i,j\}\\
                 g_{iq} + g_{iq}/\gamma_{ip} &\text{ if } j=q\\
                 g_{qj} + g_{qj}\gamma_{pj} &\text{ if } i=\modified{q}
        \end{cases}
    \]
\end{itemize}
The following lemma is immediate due to $g^\mu_e = 1$.
\begin{lemma}\label{lem:contractadditive}
Let $(g,\mu)$ be a fitting pair in $\instance$, where
$e=pq$ is tight for $\mu$. Then
$\nabla (g/e)^{\mu/e}_{p} = \nabla g^{\mu}_p + \nabla g^{\mu}_q$, and if $t \notin \{p,q\}$, 
$b^{\prime \mu/e}_q = b^\mu_p + b^\mu_q$.
\nnote{Technically, this notation for $b$ in the contracted instance is imprecise. But it is clear. 
Should we change it to the uglier $b^{\prime \mu/e}_q$?}\lnote{Changed to the uglier yet precise form.}
\end{lemma}

\paragraph{Initial solutions.} 
Our main algorithm will use the following assumption.
\assumption{$\star$}{cond:init}{An initial fitting pair $(\bar
  f,\bar \mu)$ is given, where $\bar f\in \R_{+}^E$ is feasible to
  \eqref{primal}. 
  }
According to this assumption, both \eqref{primal} and
\eqref{dual} are feasible; in particular, \eqref{primal} is bounded.
For arbitrary input instances  (that can also be infeasible or
unbounded) we will use an overall scheme akin to the two phase simplex
method; this is described in Section~\ref{sec:first-phase}.
When calling the main algorithm in the first phase, it will be applied
to a modified instance where one can trivially
provide an initial fitting pair satisfying \nameref{cond:init}.
For the original problem, the first phase will either find an initial fitting pair satisfying
\nameref{cond:init}, or otherwise a certificate of infeasibility or
unboundedness.



\section{The generalized flow algorithm}\label{sec:alg}
\subsection{Technical overview}\label{sec:overview}
The main progress in the algorithm will be finding a fitting pair
$(f,\mu)$ with
a safe labelling $\mu$ containing an \emph{abundant arc}---an arc whose relabelled flow is large compared to the total excess and deficit across all nodes.
It can be shown that an abundant arc must be tight in every dual
optimal solution. Namely, if we contract an abundant arc, then an essentially optimal dual solution can be easily
pulled back from the contracted instance to the original one (see Lemma~\ref{lem:pullback} below). Once we
have an essentially optimal dual solution, primal and dual
optimal solutions can be found via Lemma~\ref{lem:optimality}.

This is very similar to the scheme used by V\'egh~\cite{laci}. In
fact, both algorithms can be seen as descendants of Orlin's algorithm~\cite{Orlin93}
for minimum cost circulations.
Orlin's algorithm repeatedly finds abundant arcs using a variant of the classical
Edmonds-Karp scaling algorithm~\cite{Edmonds72}. 
The size of the network can be reduced by contracting the abundant arcs. 
We note that the idea of obtaining strongly polynomial algorithms by
repeatedly identifying constraints that must be tight in every optimal
solution goes back to the seminal work by Tardos~\cite{Tardos85strong}.

\medskip

Our augmenting path subroutine for identifying abundant arcs is vastly
simpler and more efficient than the one in \cite{laci}. The crucial idea is
\emph{relaxing the feasibility of the flow $f$} in the augmenting path algorithm. That is,
nodes $i$ with $\netflow{f}{i}<b_i$ will be allowed. This is
a quite radical change compared to all previous generalized flow algorithms. In
fact, ``fixing'' a node deficit can be very difficult:
compensating for just a tiny shortfall in node demands can be at the
expense of an arbitrarily large drop in the objective value.
We avoid this problem by maintaining that the labelling $\mu$ remains safe throughout.

Relaxing feasibility enables a more natural algorithmic framework, and
eliminates some significant technical challenges
in previous algorithms.
 We need to maintain the safety of the
labelling, but this happens without additional effort.
The most salient consequences are the
following.
\begin{itemize}
\item First of all, we can easily \emph{maintain a fitting pair
$(f,\mu)$} throughout. In contrast, V\'egh~\cite{laci} had to introduce a
relaxation of this concept called $\Delta$-feasibility, depending on
the current scaling factor $\Delta$. An earlier algorithm that
maintained a fitting pair throughout was the algorithm of Goldfarb, Jin,
and Orlin \cite{Goldfarb97}, however, it came at the expense of
maintaining arc imbalances in an intricate bookkeeping framework.
\item Although our algorithm can be seen as an enhanced version of the
    \emph{continuous scaling} technique in V\'egh~\cite{laci}, the description
  does not even include a scaling factor, prevalent in the previous
  combinatorial methods. Instead,  we  maintain that {\em the
  relabelled flow $f^\mu$ is integral} throughout, except for the very
  final step when an exact optimum is computed. This has no precedent in previous algorithms, 
  and is surprising because the generalized flow problem is perceived as a genuinely non-integral problem. 
  Let us note that the
  value of $\mu_t$ corresponds to the scaling factor $\Delta$ used in
  previous scaling methods, e.g.,~\cite{Goldberg91,Goldfarb97,Radzik04,Vegh11,laci}; 
  we relax the standard requirement $\mu_t=1$ so that we can work with integer solutions.

 \item 
     A primary reason for the improved running time is a new and very direct \emph{additive} potential analysis, compared to the \emph{multiplicative} analysis in~\cite{laci}. 
     Roughly speaking, every path augmentation decreases our chosen potential by one; however, as long as there are no abundant arcs, the potential remains bounded.
     The analysis in \cite{laci} also charges the number of augmentations against a similar potential, but argues about the cumulative decrease in the scaling factor $\Delta$ in a rather indirect way.
\end{itemize}

In a strongly polynomial algorithm, one also needs to guarantee that the
sizes of numbers remain polynomially bounded in the input size. 
In the previous algorithm~\cite{laci}, this required cumbersome additional rounding steps. 
In contrast, we can achieve this rather easily.

Another technical novelty is our use of \emph{essentially optimal} dual solutions rather than \emph{optimal} dual solutions. 
An optimal dual solution may take on infinite values, 
in which case it may not provide enough information to find
a corresponding primal optimal solution via complementary slackness.
In most of the previous literature,
``dummy arcs'' of very small gains were added from all nodes to $t$ to enforce the
existence of a finite dual optimum. 
This is somewhat unattractive, and would be particularly problematic for an actual implementation due to numerical issues.
We circumvent this problem by using essentially optimal duals that always take on finite values, 
and contain sufficient information to easily identify primal and dual optima.

A further distinguishing feature of our algorithm is that we do not use an
initial cycle cancelling subroutine. Most combinatorial methods start
with the assuming the existence of an initial fitting pair as in
\nameref{cond:init}. In order to obtain this, flow generating cycles
(that is, cycles $C\subseteq E_f$ with $\gamma(C)>1$)
have to be eliminated first. Radzik~\cite{radzik93} adapted the
Goldberg-Tarjan minimum-mean cycle cancelling algorithm~\cite{Goldberg89} 
to cancel all flow generating cycles in strongly
polynomial time. We avoid using this subroutine, and instead perform
our algorithm in two phases, as in the two phase simplex
algorithm. In the first phase for feasibility, we obtain the fitting
pair used as the starting for the second phase. We note that the overall
running time of our algorithm is better than the running time of
Radzik's cycle cancelling subroutine~\cite{radzik93}. 
However, this two-phase scheme is not
particular to our current algorithm---it could be applied to previous algorithms as well.

\subsection{Initial rounding}\label{sec:round}
Our main algorithm is recursive in nature.
Rather than satisfying \nameref{cond:init}, this procedure (which we will refer to as \RecursiveGenFlow) will require that the input satisfies the following property.
\assumption{$\star\star$}{cond:integral}
{
    A fitting pair $(f,\mu)$ is given where $\mu$ is safe, $f^\mu$ is integral (but not necessarily feasible), and the support of $f$ is an orientation of a forest. 
}
In fact, we will maintain this property through the main steps of our algorithm.
The requirements that $\mu$ is safe and $f^\mu$ be integral will be crucial. 
On the other hand, the requirement that the support of $f$ be an orientation of a forest---for brevity, we will say that $f$ is \emph{acyclic} if this is satisfied---is  mild:
given any fitting pair $(f,\mu)$, it is easy to modify $f$ so that this holds.
The purpose of this requirement is to maintain a level of sparsity that will encourage the formation of abundant arcs and aid us in obtaining our desired running time. 

We apply a simple rounding procedure to transform an instance satisfying \nameref{cond:init} to one satisfying \nameref{cond:integral}, where in addition
$-1 < \nabla f^\mu_i - b^\mu_i < 2$ for all $i \in \Vm$.
This ensures that the total relabelled excess and deficit are both small, 
which will be important in obtaining a bound on the running time.
The transformation is based on the following lemma.
\begin{lemma}\label{lem:round}
    Let $(\bar{f},\mu)$ be a fitting pair. Then there exists an acyclic $f$ fitting $\mu$ such that $f^\mu\in \Z_+^E$, and $\lfloor \netflow{\bar{f}^\mu}{i}\rfloor \le
    \netflow{f^\mu}{i}\le \lceil \netflow{\bar{f}^\mu}{i}\rceil$.
Moreover, $f$ can be found with a single maximum flow computation.
\end{lemma}
\begin{proof}
Consider the feasible circulation problem on
$(V,E^\mu)$, with lower and upper node demands $\lfloor
\netflow{\bar{f}^\mu}{i}\rfloor$ and $\lceil \netflow{\bar{f}^\mu}{i}\rceil$.
The flow $\bar{f}^\mu$ is a feasible solution; hence, there exists an integer
solution $g$, which can be found by a maximum flow
algorithm. 
Since there are no arc capacities, $g$ can clearly be chosen to be acyclic.
Then $f_{ij}:= g_{ij} \mu_i$ is the desired solution.
\end{proof}
\modified{The first two steps of Algorithm~\ref{alg:roundmaxflow} implement the} reduction from \nameref{cond:init} to \nameref{cond:integral}.
Note that since $\mu$ is just a scaling of $\bar{\mu}$, and $\bar{f}$ is a feasible flow fitting $\bar{\mu}$, $\mu$ is clearly safe.
The final subroutine \textsc{Final-Solution}$(\hat \mu)$ takes an
essentially optimal dual $\hat\mu$, and computes optimal primal and
dual solutions as in Lemma~\ref{lem:optimality}. Namely, the primal
optimal solution $f$ can be obtained via a maximum flow computation, and
the dual optimal solution can be obtained by setting the dual
variables to infinity for the nodes that cannot reach $t$ in $E_f$.

\begin{algorithm}
    \caption{\sc Max-Generalized-Flow}\label{alg:roundmaxflow}
    \raggedright
    \begin{algorithmic}[1]
        \Require{Instance $\instance = (V,E,t,\gamma,b)$, fitting pair $(\bar{f},\bar{\mu})$ satisfying \nameref{cond:init}.}
    \Ensure{An optimal solutions to the systems \eqref{primal} and \eqref{dual}.} 
        \State $\mu \gets \bar{\mu} \cdot \max_{i \in \Vm} (\netflow{\bar f^{\bar\mu}}{i}-b_i^{\bar{\mu}})$.
        \State Round $\bar{f}$ using Lemma~\ref{lem:round} to obtain an acyclic $f$ fitting $\mu$ with $f^\mu$ integral and 
        $\lfloor b^\mu_i \rfloor \leq \netflow{f^\mu}{i} \leq \lceil b^\mu_i \rceil$ + 1.
        \State $\hat\mu\gets$ \Call{\RecursiveGenFlow}{$\instance,
          f,\mu$}.
\State \Return \Call{Final-Solution}{$\hat\mu$}
\end{algorithmic}
\end{algorithm}

\subsection{The overall algorithm via arc contraction}

We can now describe the overall structure of
\RecursiveGenFlow$(\instance,f,\mu)$, which is based on arc
contractions. We now give a sufficient condition to identify an arc
$e$ such that an essentially optimal pair in $\instance/e$ can be
extended to an essentially optimal pair in $\instance$.
\begin{definition}
    Suppose $(g,\mu)$ is a fitting pair with $\mu$ safe. 
    Then we call an arc $e \in E$ \emph{abundant} with respect to $(g,\mu)$ if 
    \[ g^\mu_e \geq \Ex(g,\mu) + \Deficit(g,\mu). \]
\end{definition}
\begin{lemma}\label{lem:pullback}
    Let $(g, \mu)$ be a fitting pair in $\instance$ with $\mu$ safe, and 
    suppose that $e=pq$ is abundant with respect to $(g,\mu)$. Let
    $\mu^*$ be an essentially optimal dual to the instance $\instance/
    e$. Let us define $\hat{\mu} \in \R_{++}^V$ by $\hat{\mu}_i :=
    \mu^*_i$ for $i \in V\setminus\{q\}$, and $\hat{\mu}_{\modified{p}} := \mu^*_{q} / \gamma_e$.
Then $\hat \mu$ is an essentially optimal dual to the instance $\instance$.
\end{lemma}
Consequently, if we can identify an abundant arc, we will be able to usefully recurse on the contracted instance.

\begin{algorithm}
    \caption{\RecursiveGenFlow}\label{alg:overall}

    \raggedright

    \begin{algorithmic}[1]
        \Require{Instance $\instance = (V,E,t,\gamma,b)$, fitting pair $(f,\mu)$ satisfying \nameref{cond:integral}.}
    \Ensure{An essentially optimal dual solution.} 
        
        \State $(f,\mu)\gets $\Call{Produce-Abundant-Arc}{$f,\mu$}.
        \If{$(f,\mu)$ is essentially optimal}
            \Return $\mu$.
        \EndIf
        \State Let $e=pq$ be an abundant arc with respect to $(f,\mu)$.
        \State $\mu^* \gets
        $\Call{\RecursiveGenFlow}{$\mathcal{I}/e, f/e, \mu/e$}.

        \State $\hat{\mu}_i \gets \mu^*_i$ for $i \in V \setminus \{q\}$; $\hat{\mu}_q \gets \mu^*_q / \gamma_e$.
         \State \Return $\hat\mu$.
    \end{algorithmic}
\end{algorithm}

\RecursiveGenFlow{} is described in Algorithm~\ref{alg:overall}.
The main work of the algorithm takes place in the subroutine \textsc{Produce-Abundant-Arc} (Section~\ref{sec:plentiful}).
This routine \edited{updates the flow and the labels}, and terminates with a fitting pair that is either essentially optimal, or contains an abundant arc.
In the former case we are done.
In the latter case, the abundant arc is contracted, and the algorithm recursively called on the contracted instance; the contracted images of the current primal-dual pair are used as initial solutions in the recursive call. 
Finally, the essentially optimal dual to the contracted instance is pulled back, via Lemma~\ref{lem:pullback}.

\begin{proof}[Proof of Lemma~\ref{lem:pullback}.]
Let $g$ be the feasible flow fitting $\mu$ with $\Ex(g,\mu)\le
Ex(f,\mu)$ and $\|f-g\|_\infty\le \Deficit(f,\mu)$ as guaranteed by Lemma~\ref{lem:safe-flow}. Note that if
$e=pq$ is an abundant arc with respect to $(f,\mu)$, then it is also
abundant with respect to $(g,\mu)$, namely, $g^\mu_e\ge \Ex(g,\mu)$.

Let $g' = g/e$ and $\mu' = \mu/e$.     Let $V'$, $t'$ and $b'$ be the node set, sink and demands of $\instance/e$ respectively.
 Let $(f^*, \mu^*)$ be an essentially optimal fitting pair for $\instance' = \instance / e$,
    where in addition, $\|f^*-g'\|_1$ is minimal. 
    We first need the following claim:
\begin{claim*}
   The support of $f^{*} - g'$ does not contain any directed cycles.
\end{claim*}
\begin{proof}
    Consider $h = f^* - g'$.
    For a contradiction, suppose that $C$ is a cycle in $\supp(f^{*} - g')$. 
    Note that since $C \subseteq E_{g'}$,  $\gamma(C) = \gamma^{\mu'}(C) \leq 1$.
    Similarly $\ole{C} \subseteq E_{f^*}$, and so $\gamma(\ole{C}) = \gamma^{\mu^*}(\ole{C}) \leq 1$.
    Thus, $\gamma(C) = 1$.
    We can therefore augment $f^*$ by sending a positive amount of flow along $\ole{C}$ without
    changing $\nabla f^*_i$ at any node. This yields another optimal
    solution $f^{**}$ such that  $\|f^{**}-g'\|_1<\|f^{*}-g'\|_1$;
    note that $(f^{**},\mu^*)$ also satisfies essential
    optimality. 
This gives a contradiction.
\end{proof}

Let us define $\hat{f}$ to be a flow on $\obe{E}$ such that
$\hat{f} / e = f^*$, $\nabla \hat{f}_p = b_p$ (unless $p=t$, in which case we instead require $\nabla \hat{f}_q = b_q$), and either
     $\hat{f}_{pq}$ or $\hat{f}_{qp}$ is zero.
%
%
The proof will be completed by showing that $(\hat f,\hat \mu)$ is an essentially optimal fitting pair.

     Since $(f^*, \mu^*)$ is essentially optimal, $\nabla f^*_i = b'_i$ for all $i \in V' \setminus \{t'\}$ that can reach $t'$ in $E_{f^*}$.
     By the choice of $\hat{f}$ as well as Lemma~\ref{lem:contractadditive}, it follows that $\nabla \hat{f}_i = b_i$ for all $i \in V \setminus \{t\}$ that can reach $t$ in $E_{\hat{f}}$. 
    We also clearly have that $\hat{f}$ is supported on tight arcs of $\hat{\mu}$.
    So the only potential obstruction to the essential optimality of $(\hat{f}, \hat{\mu})$ is the possibility that $\hat{f}_{qp} > 0$ (since this corresponds to negative flow on $pq$).
    
    Let $\hat{h} := \hat{f} - g$. 
    According to the Claim above, $\hat{h}/e = f^* - g'$ does not
    contain any directed cycles, and the same must hold for $\hat{h}$.
    It follows that we can decompose $\hat{h}$ as $\hat{h} =
    \sum_{\ell=1}^k h^\ell$, where each $h^\ell$ is a path flow from
    some node $p^\ell$ where $\nabla \hat{f}_{p^\ell} < \nabla
    g_{p^{\ell}}$ to some node $q^\ell$ where $\nabla \hat{f}_{q^\ell}
    > \nabla g_{q^{\ell}}$.
    Such a decomposition is easy to construct by using a topological ordering of $\supp(\hat{h})$. 
We note that it is also a special case of the general flow
decomposition argument in, e.g., \cite{Goldberg91}; since $\supp(h)$ is acyclic, four out
of the five types of elementary flows, Types II-V, cannot
exist.)

    Let $P^\ell$ denote the support of $h^\ell$, and $\lambda^\ell$ the value of $h^\ell$ on the first arc of $P^\ell$. 
    Since $\mu$ fits $g$, and $P^\ell\subseteq \supp(\hat{h}) \subseteq E_{g}$, we have $\gamma_{ij}^\mu \leq 1$ for all arcs of $P^\ell$, and therefore the relabelled flow $({h^\ell})^\mu$ is nonincreasing along $P^\ell$. 
    By optimality of $f^*$, $\nabla g'_{t'} \leq \nabla f^*_{t'}$.
    We claim that this implies that $\nabla g_t \leq \nabla \hat{f}_t$.
    This is immediate if $t \notin \{p,q\}$.
    If $t=q$, then $\nabla g_p \geq b_p = \nabla \hat{f}_p$, and so by Lemma~\ref{lem:contractadditive} $\nabla g_q \leq \nabla \hat{f}_q$.
    The case $t=p$ is similar.
    
    Thus, for any arc $a \in \obe{E}$,
\[
    \hat{h}^\mu_{a}\leq \sum_{\ell=1}^k ({h^\ell_{a}})^\mu
    \leq  \sum_{\ell=1}^k \frac{\lambda^\ell}{\mu_{p^\ell}}
    = \sum_{i\in V} \max\{\nabla (g - \hat{f})^\mu_{i},0\}
    \leq \sum_{i\in \Vm} \max\{\nabla g^\mu_{i}-b^\mu_i,0\}
    =\Ex(g,\mu).
\]
Since $g^\mu_{e} \geq \Ex(g, \mu)$ by assumption, and $\hat{f} = g + \hat{h}$, it follows that $\hat{f}_{qp} = 0$, as required.
\end{proof}

\subsection{Obtaining an abundant arc}\label{sec:plentiful}
A key step is a call to the subroutine \textsc{Highest-gain}$(\hat V,\hat E,\hat \gamma,Q')$,
where the input is a directed graph $(V',E')$ with gain factors
$\gamma'\in [0,1]^{E'}$, and $Q'\subseteq V'$. This returns the
labels $\sigma\in \R_{+}^{V'}$ where
\[
\sigma_i=\max\{\gamma'(P): P \mbox{ is a directed walk from $i$ to
  $Q'$}\}.
\]
If $i$ cannot reach $Q'$ in $E'$, then we define $\sigma_i=0$.
Note that $\sigma_i\le 1$ for all $i\in V'$.
Note that computing a highest gain augmenting path with respect to
the gains $\gamma'_e$ is equivalent to computing a shortest path for
the nonnegative weights $-\log \gamma'_e$. 
\modified{A standard way to avoid computations with logarithms by directly implementing 
\textsc{Highest-gain} as a multiplicative variant of Dijkstra's
algorithm. The running time is the same $O(m+n\log n)$ as for the standard (additive) Dijkstra algorithm. The same running time can be obtained if the graph is the extended network of $m+n$ nodes and $2m$ arcs obtained by reducing from the capacitated version~\cite{Orlin93}.}

A more technical subroutine is \textsc{Round-Labels}$(f,\mu)$.
This makes small perturbations to the labels (while preserving $f^{\mu}$ and all important structure) in order to ensure that the encoding lengths of labels remain small during the algorithm.
We discuss the implementation of this subroutine, and the encoding length bounds, in Section~\ref{sec:numerics}.
For now, we require only that the subroutine satisfies the following properties: if $(f', \mu') = \textsc{Round-Labels}(f,\mu)$, then

\medskip

\begin{compactenum}[(R1)]
    \item \label{item:rmaintain} $f^{\prime\mu'} = f^\mu$;
    \item \label{item:rtight} $(f', \mu')$ is a fitting pair with $E^{\mu'} \supseteq E^{\mu}$; and
    \item \label{item:rbounds} 
        for all $i \in \Vm$,
        $|b^\mu_i| \leq |b^{\mu'}_i| \leq \lceil |b^\mu_i| \rceil$.
\end{compactenum}

\medskip

\begin{algorithm}
    \caption{\textsc{Produce-Abundant-Arc}}\label{alg:abundant}
    \begin{algorithmic}[1]
        \Require{Fitting pair $(f,\mu)$ satisfying \nameref{cond:integral}.}
        \Ensure{Fitting pair $(f,\mu)$ satisfying \nameref{cond:integral} which is either essentially optimal, or contains an abundant arc.}

        \While{$f^\mu_e < \Ex(f,\mu) + \Deficit(f,\mu)$ for all $e \in E$}\Comment{\emph{Stop if there is an abundant arc}}
            \LineComment{Augmentation part of the iteration}
            \State $Q \gets \{ i \in \Vm : \netflow{f^\mu}{i}<b_i^\mu \}$.  \label{alg:Qdef}
            \State $R \gets \{ i \in \Vm : \netflow{f^\mu}{i} \geq b^\mu_i + 1\}$.
            \While{there is a tight path in $E_f$ from some $i \in R \cup \{t\}$ to some $j \in Q \cup \{t\}$, with $i \neq j$}
            \State Augment $f^\mu$ by sending $1$ unit from $i$ to $j$ along tight arcs (maintaining acyclicity of $f$). \label{alg:aug} 
            \State \textbf{if} there exists an abundant arc \textbf{then} \Return $(f,\mu)$.
	    \State Update $Q$ and $R$.
        \EndWhile 
        \vspace{.3cm}
        \LineComment{Label update part of the iteration}\label{alg:rescale}
        \State $\Tc \gets \{ i \in V: \exists \text{ a tight
          $i$-$t$ path and a tight $t$-$i$ path in } E_f
        \}$. \label{step:T-def}
        \State \label{step:aux} $\Vaux \gets V \cup \{\aux\}$; $\Eaux \gets E_f \cup \{ \aux i : i \in R \cup \VN \cup (\VP \cap T) \}$.
        \State \label{step:gammaaux} $\gammaaux_e \gets \gamma^\mu_e$ for all $e \in E_f$; $\gammaaux_{\aux i} \gets 1$ for $i \in R$;
        $\gammaaux_{\aux i} \gets -b^\mu_i / (1 - \nabla f^\mu_i)$ for $i \in \VN \setminus R$; and
        $\gammaaux_{\aux i} \gets b^\mu_i / (1 + \nabla f^\mu_i)$ for $i \in \VP \cap T$.
        \State $\sigma\gets $\Call{Highest-gain}{$\Vaux,\Eaux,\gammaaux,Q\cup\{t\}$}.

       \If{$\sigmab>0$} \Comment{\emph{Update labels}}
            \State $S\gets \{i\in V: \sigma_i\ge \sigmab \}$. \label{alg:S1}
            \State $\mu_i\gets \mu_i\sigmab/\sigma_i$ for $i \in S$.
            \State $f_{ij}\gets f_{ij} \sigmab/\sigma_i$ for $ij \in E[S]$. \Comment{\emph{Keep $f^\mu$ unchanged.}} \label{alg:fu}
            \State\Call{Round-Labels}{$f,\mu$}.
       \Else \Comment{\emph{Produce an essentially optimal solution}}
        \State $S\gets \{i\in V: \sigma_i>0\}$. \label{alg:S2}
       \State $f_{ij}\gets 0$ for $ij \in E[S]$. \label{alg:finalf}
          \State \Return $(f,\mu)$. \label{alg:essopt} 
        \EndIf
    \EndWhile
    \State \Return{$(f, \mu)$.} 
    \end{algorithmic}
\end{algorithm}

Algorithm~\ref{alg:abundant} gives the description of \textsc{Produce-Abundant-Arc}$(f,\mu)$. 
The output of the algorithm is either an abundant arc, or an
essentially optimal solution. 
Each iteration consists of a primal update part {\em (path augmentation)} followed
by a dual update part {\em (label update)}.
By augmenting $f^\mu$ by $1$ on a tight path
$P\subseteq E_f$, we mean increasing $f_{ij}$ by $\mu_i$ for
every original arc $ij$ in $P$ and reducing $f_{ji}$ by $\mu_i$ on every reverse arc $ij\in P$. We let $R$ denote the set of nodes with
relabelled excess at least 1, and $Q$ denote the set of nodes with negative excess.
The primal update comprises two types of path augmentations:
{\em (i)} the first type sends one  unit of relabelled flow
from a node in $R$  to the sink or to a node in
$Q$; {\em (ii)} the second type sends one unit of
relabelled flow from the sink to a node in $Q$.
To maintain the acyclicity of $f$ when augmenting from $i$ to $j$, we simply choose a path with as few arcs outside $\supp(f)$ as possible.

We continue to the label updates if no more path augmentations are possible. The label updates will guarantee the possibility of a
path augmentation in the next iteration, by creating new tight
paths. An important property of the label updates is that the relabelled flow $f^\mu$ does not change---and thus, the flow value $f_{ij}$ is updated whenever $\mu_i/\mu_j$ is modified, see line~\ref{alg:fu}. 
We note that it suffices to store only the relabelled flow $f^\mu$ during the algorithm, rather than $f$ itself.
This will remain conveniently integral: it is unchanged in the label updates and can only change by $\pm 1$ during path augmentations.

To obtain the label updates, we run the subroutine 
\textsc{Highest-Gain}$(\Vaux,\Eaux,\gammaaux,Q\cup\{t\})$, where $(\Vaux,\Eaux)$ is an auxiliary graph obtained from $(V,E_f)$ by adding a new node $\aux$ to $V$, with new edges $\aux i$ for certain nodes $i$ (lines~\ref{step:aux}--\ref{step:gammaaux}).
To understand the role of these edges and their gain factors $\gammaaux_{\aux i}$, suppose we run 
\textsc{Highest-Gain}$(V,E_f,\gamma,Q\cup\{t\})$ instead to obtain labels $\sigma_i$.
Let $S$ be the set of nodes that can reach $Q\cup \{t\}$ in $E_f$, and let $\sigma_\aux$ be the smallest label on $S$; note that $\sigma_i=0$ for $i\notin S$.
Then, with respect to the new labels $\mu'_i=\mu_i\sigma_\aux/\sigma_i\le \mu_i$, every node $i\in S$ can reach $Q\cup \{t\}$ in $E_f$ on  a tight path. However, these label changes will affect the excesses and deficiencies of some nodes. Since label changes do not change the relabelled flow values, we have  $\nabla f^{\prime \mu'}_{i}=\netflow{{f}^\mu}{i}$. On the other hand, the values $b^\mu_i$ may increase if $b_i>0$ and decrease if $b_i<0$. Consequently, the excess of nodes in $\VP$ may go down and the excess of nodes in $\VN$ may go up. \nnote{Not sure about my changes in this paragraph, or if I understood exactly what was there. So in this description \emph{all} labels are updated. The role of updating only nodes in $S$ doesn't seem to make an appearance in this description.}\lnote{Fixed. please take a look if this makes sense, and if it helps to clarify the role of $\aux$ or just adds to the confusion.}

The role of the new edges $\aux i$ is to keep all such changes under control.
These edges will guarantee that we stop with the label updates at the first instance one of the following events happen: \emph{(a)} a node in $R$ gets a tight path to $Q\cup\{t\}$; \emph{(b)} a node in $\VN\setminus R$ enters $R$ due to an increase in its excess; or \emph{(c)} a node in $\VP$ enters $Q$ due to a decrease in its excess.

In all of the above cases, $\sigma_\aux>0$, that is, $\aux$ can reach $Q\cup \{t\}$ in $E_f$. 
It is also possible that $\sigma_\aux=0$, in which case we will show that for all nodes $S$ with $\sigma_i>0$, we have $S\setminus \{t\}\subseteq \VZ$. We can set the flow  inside $S$ to $0$ and return  $(f,\mu)$ as an essentially optimal pair (lines~\ref{alg:S2}--\ref{alg:essopt}).

The set $\Tc$ is defined as the set of nodes that have  tight paths to and from $t$ in $E_f$ (line~\ref{step:T-def}). In case {\em (c)} above, we will only consider nodes in $\VP\cap \Tc$, i.e., we let the excess of nodes $i\in \VP\setminus \Tc$ decrease below $b_i^\mu-1$. The reason for this choice is somewhat subtle: if we stop the label updates because the excess of a new node  $i\in\VP$ becomes negative---thus, $i$ enters $Q$---then we would like to send flow from $t$ to $i$ in the next iteration. A tight $t$-$i$ path will however only exist if  we had $i\in T$ to begin with. It turns out that even with this choice, we can argue that the total deficiency of the nodes remains bounded (Lemma~\ref{lem:excessbounds}).

\section{Analysis}\label{sec:analysis}
In this section, we prove the following result.
\begin{theorem}\label{thm:alg-1}   
Algorithm~\ref{alg:roundmaxflow} is strongly polynomial and terminates with an
    essentially optimal dual
    solution in $O(nm(m+n\log n)\log(n^2/m)$ arithmetic operations. 
\end{theorem}
The proof of  Theorem~\ref{thm:main} will be completed in
Section~\ref{sec:first-phase}, where we show that the same
 running time guarantee can be obtained even without  assuming
 condition \nameref{cond:init} on the availability of an initial
 fitting pair of primal and dual feasible
 solutions. 

We define an \emph{augmentation} to be either a path augmentation, 
as performed in line \ref{alg:aug} of \textsc{Produce-Abundant-Arc}, 
or what we will call a \emph{null augmentation}: an event when a node $i\in \VN$ satisfies $\netflow{f^\mu}{i}<b_i^\mu$ at the start of a label update, but $\netflow{f^\mu}{i}\ge b_i^\mu$ after. 
Unlike path augmentations, these do not modify the solution at all; they are defined purely for accounting purposes.
As revealed in the analysis, null augmentations share some important
features with path augmentations.

The proof of Theorem~\ref{thm:alg-1} is given in four parts. In
Section~\ref{sec:correct}, we prove correctness: if the algorithm
terminates, it returns an essentially dual optimal solution.
Sections~\ref{sec:run-1} and \ref{sec:refined} provide the
 bound on the total number of 
 augmentations. Section~\ref{sec:numerics} presents the
 subroutine \textsc{Round-Labels}, and shows that the size of the numbers in the calculations remains
polynomially bounded in the input size (in other words, the algorithm runs in PSPACE).

Throughout this section, the values of $n$ and $m$ will always refer to the number of nodes and arcs in the original input graph. 
Hence in later recursive calls of the algorithm, the graph will have less than $n$ nodes.

\subsection{Correctness}\label{sec:correct}
In this section we prove that if Algorithm~\ref{alg:roundmaxflow}
terminates, then it returns an essentially optimal dual solution.
We start by showing some basic properties of
\textsc{Produce-Abundant-Arc}.
\begin{lemma}\label{lem:plentiful-basic}
    The following properties hold in every label update step in \textsc{Produce-Abundant-Arc} 
where $\sigmab > 0$.
\begin{enumerate}[(i)]
    \setlength{\itemsep}{1pt}
    

    \item \label{item:sigma}
 The vector $\gammaaux$ is well-defined, and $\sigmab<1$.

    \item \label{item:fittingpair} 
        $(f, \mu)$ remains a fitting pair, and there is a tight
 path in $E_f$ from every $i\in \Sc$ to $Q\cup\{t\}$. 

    \item\label{item:funchanged}
        $f^\mu$ remains unchanged. 

    \item\label{item:muincreases} 
        For each $i \in \Vm$, $|b^\mu_i|$ is nondecreasing.

    \item\label{item:augment-terminate} 
     Either at least one augmentation will be performed in the next iteration, or a null augmentation occurred during the label update.

    \item \label{item:bbounds} 
        After the label update,
        $\netflow{f^\mu}{i} \geq  b_i^\mu-1$ 
        for $i\in T$, and 
        $\netflow{f^\mu}{i} \leq b_i^\mu + 1$
        for $i \in \Sc  \setminus R$. Further, the label of every
        $i\in R$ remains unchanged until the call to \textsc{Round-Labels}.
\end{enumerate}
\end{lemma}
\begin{proof}
    Let $(f,\mu)$ denote the fitting pair at the start of the label update, and $(f', \mu')$ the values after, excluding the call to \textsc{Round-Labels}.
    It is immediate from properties \rpref{item:rmaintain}--\rpref{item:rbounds} that \textsc{Round-Labels} does not invalidate any of the claims.
We will repeatedly use the next property that follows by the definition of $\sigma_i$.
\begin{equation}\label{eq:sigma}
\sigma_i \geq \sigma_j\gamma^{\mu}_{ij}\quad \forall ij \in E_f.
\end{equation}
\begin{enumerate}[(i)]
    \item
        We need to show that the denominators in the definitions of $\gaux{i}$
        are all nonzero. We also show that $\gaux{i} \le 1$, with equality only if
$i\in R$. Assume $i\in \VN\setminus R$. Then,
$\netflow{f^\mu}{i}<b_i^\mu+1$, and thus $0<-b_i^\mu <
1-\netflow{f^\mu}{i}$. This also shows $\gaux{i}<1$. Next, let $i\in
\VP\cap T$. Then, $0<b_i^\mu\le \netflow{f^\mu}{i}$ as otherwise a path
augmentation from $t$ to $i$ would be possible. Again, we see that $\gaux{i}<1$.

Next, we show that $\sigmab<1$. Indeed, $\sigmab=1$
is only possible if $\sigma_i=\gaux{i} =1$ for some $i\in \Vm$. This in
particular yields $i\in R$, and that $i$ is connected to $Q\cup\{t\}$
by a tight path in $E_f$. Thus, it would have been possible to augment
on a path starting from $i$, a contradiction.

    \item Note that $E_{f'}=E_f$ since $f'_{ij}>0$ if and only if $f_{ij}>0$.
Thus, we need to check that $\gamma^{\mu'}_{ij} \leq 1$ for all $ij
\in E_f$. If $i$ and $j$ are both inside $\Sc$, then 
        $\gamma^{\mu'}_{ij} = \gamma^\mu_{ij} \sigma_j / \sigma_i \leq
        1$ by \eqref{eq:sigma}.    If $i$ and $j$ are both outside
        $\Sc$, then $\gamma^\mu_{ij}$ is unchanged. If $ij\in
        \delta^-(S)$, then $\sigma_j \geq \sigmab > \sigma_i$ and hence 
        $\gamma^{\mu'}_{ij}  = \gamma^{\mu}_{ij} \cdot \sigma_j /
        \sigmab < \gamma^{\mu}_{ij} \sigma_j / \sigma_i \leq 1$. 
Finally, if $ij\in\delta^+(S)$, then 
$\gamma^{\mu'}_{ij} = \gamma^\mu_{ij} \cdot \sigmab / \sigma_i \leq
\gamma^{\mu}_{ij} \leq 1$ since $\sigma_i\ge \sigmab$.

The second part of the claim follows by the definition of
\textsc{Highest-Gain}: every node $v\in V$ with $\sigma_v>0$ has a
path $P$ in $E_{f'}=E_f$ to $Q\cup\{t\}$ such that every edge $ij$ in
$P_v$ satisfies \eqref{eq:sigma} at equality. For every $i\in S$, the
entire path $P_v$ remains inside $S$, and after the relabelling, $P_v$
becomes a tight path.

    \item
        Consider any arc $ij \in E$.
        If $i,j \notin \Sc$, then $f'_{ij} = f_{ij}$ and $\mu'_i = \mu_i$. 
        if $i,j \in \Sc$, then $f'_{ij} / f_{ij} = \mu'_i / \mu_i = \sigmab/\sigma_i$. 
        Arcs $ij\in \delta^-(\Sc)$ cannot be tight, as otherwise
        \eqref{eq:sigma} would contradict $\sigma_i < \sigma_j$.
        This further implies that $f_{ij} = 0$ for every $ij\in \delta^+(\Sc)$, 
        since otherwise $ji\in E_f$ would be a tight arc entering $\Sc$.
        In all cases, $f^{\prime \mu'}_{ij} = f^\mu_{ij}$.

    \item This is immediate, since $\mu'_i = \mu_i$ for $i \notin \Sc$, and $\mu'_i = \mu_i \sigmab/\sigma_i$ for $i \in \Sc$, 
        where $\sigma_i \geq \sigmab$ by the definition of $S$.

    \item 
        Consider some $v \in R \cup \VN \cup (\VP \cap T)$ so that $\sigmab = \sigma_v\gaux{v}$.
        We have 
        \[ \nabla f^{\prime \mu'}_v - b^{\mu'}_v = \nabla f^\mu_v - b^\mu_v / \gaux{v}. \]
        Take the tight $v$-$(Q \cup \{t\})$ path $P$ with respect to $\mu'$, as
        guaranteed by part \eqref{item:fittingpair}.
        If $v \in R$, then $\gaux{v} = 1$ and $\nabla f^\mu_v - b^\mu_v \geq 1$, 
        and if $v \in \VN \setminus R$, $\nabla f^\mu_v - b^\mu_v / \gaux{v} = 1$ from the definition of $\gaux{v}$.
        In either case, if $v\notin Q$, then $P$ will be a possible augmenting path in the next iteration.
      If $v\in Q \cap (\VN \setminus R)$, then a null augmentation has occurred.
  
        Finally, if $v \in \VP \cap T$, then the definition of $T$
        guarantees the existence of a closed walk $C$ of tight edges
        in $E^{\mu'}_f$ containing both
        $t$ and $v$. Since $\gamma^{\mu'}(C)=\gamma^\mu(C)=1$, and
        $\gamma^{\mu'}_e\le 1$ for all $e\in E_{f'}=E_f$, it follows
        that every edge in $C$ is tight also with respect to
        $\mu'$. Consequently, $C$ contains a tight $t$-$v$ path $P'$ with
        respect to $\mu'$, and $\nabla f^\mu_v - b^{\mu}_v / \gaux{v} = -1$, again from
        the definition of $\gaux{v}$. Hence, $P'$ is a possible augmenting path.

    \item 
        Consider any $i \in T$. Then, $ \nabla f^\mu_i\ge  b^\mu_i$
        before the augmentation. If $i\notin \VP$, then
        $b^{\mu'}_i\le b^\mu_i$ (part \eqref{item:muincreases}). If
        $i\in \VP\cap T$, then 
        $\sigmab \geq \sigma_i \gaux{i}$, and so $\sigmab / \sigma_i \geq b^\mu_i / (1 + \nabla f^\mu_i)$.
        Hence 
        \[ \nabla f^{\prime \mu'}_i - b^{\mu'}_i = \nabla f^\mu_i - b^\mu_i \sigma_i / \sigmab \geq -1. \]
        
        Now consider any $i \in \Sc \setminus R$. Then,  $ \nabla
        f^\mu_i<  b^\mu_i+1$ before the augmentation, and 
if $i \notin \VN$, then $b^{\mu'}_i\ge b^\mu_i$. In case $i\in \VN\setminus R$,
then $\sigmab \geq \sigma_i \gaux{i}$, so $\sigmab / \sigma_i \geq -b^\mu_i / (1 - \nabla f^\mu_i)$.
        Hence 
        \[ \nabla f^{\prime \mu'}_i - b^{\mu'}_i = \nabla f^\mu_i -
        b^\mu_i \sigma_i / \sigmab \leq 1. \]
Finally, let $i\in R$. We clearly have $\mu'_i=\mu_i$ if  $i\in
V\setminus S$. If $i\in S$, then we must have
$\sigmab=\sigma_i\gaux{i}=\sigma_i$. Hence, $\mu_i'=\mu_i\sigmab/\sigma_i=\mu_i$.  
\end{enumerate}
\end{proof}
Next we show that despite not making any effort to maintain the feasibility of $f$, 
safety is preserved.

\begin{lemma}\label{lem:safe}
The labelling $\mu$ remains safe throughout Algorithm~\ref{alg:overall}.
\end{lemma}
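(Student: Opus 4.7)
I plan to verify safety is maintained under each of the algorithm's updates to $\mu$. Path augmentations leave $\mu$ fixed, so nothing needs to be checked. For an arc contraction carried out by \textsc{Reduce} that contracts arc $pq$, I would check safety via Lemma~\ref{lem:safe-label}: any set $X$ in the contracted instance with $\delta^-(X) \cap E^\mu = \emptyset$ lifts to a set $X'$ in the original (equal to $X$, or $X \cup \{p\}$ when the contracted node $q$ lies in $X$). Tightness of arcs is preserved by the contraction, since the new $\gamma_{iq}^\mu$ equals the old $\gamma_{ip}^\mu$ because $pq$ is tight; hence $\delta^-(X') \cap E^\mu = \emptyset$ in the original as well. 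The identity $b_q^\mu + b_p^\mu = (b_q + \gamma_{pq} b_p)/\mu_q$ shows that $b^\mu(X)$ in the contracted instance equals $b^\mu(X')$ in the original, which is $\leq 0$ by safety of the original $\mu$. The main work is the label-update step of \textsc{Produce-Plentiful-Node}, which I address next.

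Let $\mu^{\text{old}}$ and $\mu^{\text{new}}$ denote the labels before and after the update, so $\mu^{\text{new}}_i = \mu^{\text{old}}_i/\alpha$ for $i \in \Sc$ and $\mu^{\text{new}}_i = \mu^{\text{old}}_i$ otherwise. By Lemma~\ref{lem:safe-label} it suffices to show: for every $X \subseteq \Vm$ with $\delta^-(X) \cap E^{\mu^{\text{new}}} = \emptyset$, one has $b^{\mu^{\text{new}}}(X) \leq 0$. I will decompose $X$ into its intersections with $\Sc$, $\Ss \setminus \Sc$, and $V \setminus \Ss$, denoted $Z$, $W$, and $U$, so that
\[ b^{\mu^{\text{new}}}(X) = \alpha\, b^{\mu^{\text{old}}}(Z) + b^{\mu^{\text{old}}}(W) + b^{\mu^{\text{old}}}(U), \]
and argue each term is non-positive. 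For $Z$ and $W$, I will directly verify $\delta^-(Z) \cap E^{\mu^{\text{old}}} = \emptyset$ and $\delta^-(W) \cap E^{\mu^{\text{old}}} = \emptyset$, using the structural facts that $\Sc$ has no incoming tight arcs under $\mu^{\text{old}}$ (being a connected component of $\obe{E^{\mu^{\text{old}}}}[\Ss]$) and that no tight arc crosses from $V \setminus \Ss$ into $\Ss$ (such an arc would force its source into $\Ss$). The remaining candidate arcs lie inside $\Sc$ or inside $V \setminus \Sc$, where tightness is preserved by the update, and so are excluded by the hypothesis $\delta^-(X) \cap E^{\mu^{\text{new}}} = \emptyset$. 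Safety of $\mu^{\text{old}}$ then yields $b^{\mu^{\text{old}}}(Z), b^{\mu^{\text{old}}}(W) \leq 0$.

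The main obstacle is the term $b^{\mu^{\text{old}}}(U)$. Applying Lemma~\ref{lem:safe-label} to $U$ directly under $\mu^{\text{old}}$ does not work, because $\delta^-(U) \cap E^{\mu^{\text{old}}}$ may contain tight arcs from $\Sc$ to $U$ that are no longer tight under $\mu^{\text{new}}$ and so are not excluded by the hypothesis. Instead I would apply a flow-balance argument to the current flow $f$, which fits both $\mu^{\text{old}}$ and $\mu^{\text{new}}$ by Lemma~\ref{lem:plentiful-basic}(i). First, $f$ carries no flow on any tight arc with source in $\Ss$ and destination in $V \setminus \Ss$: such flow would produce a reverse tight arc in $E_f$ placing the destination in $\Ss$, contradicting its membership in $V \setminus \Ss$. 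Second, no tight arc in $\mu^{\text{old}}$ (equivalently in $\mu^{\text{new}}$, since both endpoints lie in $V \setminus \Sc$) joins $V \setminus \Ss \setminus U$ to $U$, directly by the hypothesis. Together these give $f^{\mu^{\text{old}}}(\delta^-(U)) = 0$, so $\sum_{j \in U} \netflow{f^{\mu^{\text{old}}}}{j} = -f^{\mu^{\text{old}}}(\delta^+(U)) \leq 0$. Finally, every $j \in U \subseteq V \setminus \Ss$ lies outside $Q$, hence satisfies $\netflow{f^{\mu^{\text{old}}}}{j} \geq b^{\mu^{\text{old}}}_j$; summing gives $b^{\mu^{\text{old}}}(U) \leq 0$. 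Combining with the previous two bounds completes the proof.
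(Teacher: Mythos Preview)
Your proof is correct and follows essentially the same approach as the paper: the same three-way decomposition of $X$ by its intersections with $\Sc$, $\Ss\setminus\Sc$, and $V\setminus\Ss$, with safety of $\mu^{\text{old}}$ handling the first two pieces and a flow-balance argument (no flow entering $X\setminus\Ss$, combined with nonnegative excess there) handling the third. The only differences are presentational---you argue directly that each piece contributes nonpositively, while the paper frames it as a contradiction---and you spell out the \textsc{Reduce} case that the paper dismisses as obvious.
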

\begin{proof}
Safety of the input $\mu$ is required, and safety is obviously preserved by the contraction of a tight arc.
The nontrivial part is to show that it is also maintained during the label update steps in \textsc{Produce-Abundant-Arc}. 
So assume $\mu$ is safe before a label update, and let $\mu'$ denote
the updated labels (ignoring the call to \textsc{Round-Labels}; since this routine preserves tight arcs, it clearly preserves safety). Let $S_0\subseteq V$ denote the set that can reach
$Q\cup\{t\}$ on a tight path in $E_f$ before the label update. 
Clearly, $\sigma_i=1$ for all $i\in S_0$.
Since $\mu$ is safe, there is a flow $g$ certifying this fact: $(g, \mu)$ is a fitting pair, and $g$ is feasible.
By the definition of $S_0$, there are no tight arc with respect to $\mu$ in $\delta^-(S_0)$, and $g$ is supported only on tight arcs, $g(\delta^-(S_0)) = 0$.
We now construct a flow $g'$ certifying the feasibility of $\mu'$ as follows:
\[ g'_e = \begin{cases} g_e & \text{for } e \in E[S_0]\\
        f_e & \text{otherwise.}
    \end{cases}
\]
Since $f$ does not send any flow either into or out of $S_0$, $g'_e =
0$ for all $e \in \delta^+(S_0) \cup \delta^-(S_0)$.
Note that for every $ij\in \supp(f)$, we have $\sigma_i=\sigma_j$, and
therefore $\gamma^{\mu'}_{ij}=1$. We therefore see that $(g',\mu')$ is
also a fitting pair.
Now for any $i \in S_0$, 
\[ 
    \nabla g'_i = \nabla g_i + \sum_{j:ij \in \delta^+(S_0)}g_{ij}  \geq b_i\,,
\]
whereas for any $i \notin S_0$, 
\[ \nabla g'_i = \nabla f_i \geq b_i\, , \]
since $Q \subseteq S_0$.
So $g'$ is feasible, and hence $\mu'$ is indeed safe.
\end{proof}
In the analysis of the case $\sigmab=0$ as well as later on we will
use the following simple claim:
\begin{claim}\label{cl:S0}
    Let $S_0$ denote the set of nodes that can reach $Q \cup \{t\}$ along a tight path in $E_f$.
Then, there are no
    tight arcs in $E$ from $T$ to $S_0 \setminus T$. 
\end{claim}
\begin{proof}
Assume for
    a contradiction that there is a
    tight arc $ij\in E^\mu_f$ with $i\in T$ and $j\in S_0\setminus T$. By
    the definition of $T$, there is a tight path $P$ from $t$ to $i$,
    and by the definition of $S_0$, there is a tight path $P'$ from
    $j$ to some $q\in Q\cup\{t\}$. If $q\in Q$, then $P\cup P'$ contains a tight
    $t$-$q$ walk $P''$, a contradiction since it would have been
    possible to augment on $P''$. On the other hand, $q=t$ implies
    that $j\in T$, because $P'$ is a tight $j$-$t$ path, and appending
    $ij$ to $P$ yields a tight $t$-$j$ path. This 
    contradicts  $j\in S_0\setminus T$.
\end{proof}

\begin{lemma}\label{lem:essopt-termination}
    If $\sigmab=0$ in \textsc{Produce-Abundant-Arc}, then it terminates with an essentially optimal fitting pair.
\end{lemma}
\begin{proof}
  In this case, $S$ is precisely the set of nodes that can reach $Q\cup\{t\}$ in $E_f$.
    Let $f$ denote its value before the final change in line~\ref{alg:finalf}, and $f'$ its final value.
    Thus $\delta^-_{E_f}(\Sc) = \emptyset$, i.e., $\delta^-(\Sc) = \emptyset$ and $f_{ij} = 0$ for all $ij \in \delta^+(\Sc)$.

    To prove essential optimality of $(f', \mu)$, it suffices to show
    that $\Sc \subseteq V^0$.
    For then $\nabla f'_i = 0 = b_i$ for all $i \in \Sc \setminus \{t\}$ (and hence all $i$ that can reach $t$ in $E_{f'}$, since $\delta^-(\Sc) = \emptyset$); 
    and $\nabla f'_i = \nabla f_i \geq b_i$ for all $i \notin \Sc$, since $Q \subseteq \Sc$.

    Since $\sigmab = 0$, there are no paths in $E_f$ from any node in $R \cup \VN \cup (\VP \cap T)$ to $Q \cup \{t\}$, and hence all these nodes lie outside of $\Sc$.
    It remains to show that all nodes of $\VP \setminus T$ also lie outside of $\Sc$.

We let $S_0$ as in Claim~\ref{cl:S0}.  Clearly, $T \subseteq S_0 \subseteq S$. 
Claim~\ref{cl:S0} asserts that there are not tight arcs from $T$ to
$S_0\setminus T$. 
    Since there are also no tight arcs in $\delta^-(S_0)$, we have that $\delta^-_{E^\mu}(S_0 \setminus T) = \emptyset$. 
    Since $\mu$ is safe, it follows by Lemma~\ref{lem:safe-cut} that 
    $b^\mu(S_0\setminus T)\leq 0$.

    Next, consider $S \setminus S_0$. We claim that $f(\delta^-(S
    \setminus S_0)) = 0$.
Clearly, there are no arcs $ij\in E$ with $i\notin S$ and $j\in
S\setminus S_0$. Assume $f_{ij}>0$ for some arc $ij\in E$ with $i\in S_0$,
$j\in S\setminus S_0$. Then $ji\in E^\mu_f$, and thus $j\in S_0$, a
contradiction.
Since $Q \subseteq S_0$, we see that
\[
   b^\mu(S\setminus S_0)\leq \sum_{i\in S\setminus
  S_0}\netflow{f^\mu}{i}=f(\delta^-(S\setminus
S_0))-f(\delta^+(S\setminus S_0))\leq 0.
\]

    Altogether, we conclude that $b^\mu(S\setminus T) \leq 0$. As we
    already know that $S\setminus T\subseteq \VP\cup V^0$, it follows that $\VP \cap (S \setminus T) = \emptyset$, as required.
\end{proof}

The correctness of Algorithm~\ref{alg:roundmaxflow} follows by the above
statements and by Lemma~\ref{lem:pullback}. 
%

\subsection{Bounding the number of augmentations}\label{sec:run-1}
%
In this section, we will set up the required potential analysis, 
and prove a strongly polynomial bound on the overall number of augmentations.
This analysis will be further improved in Section~\ref{sec:refined} to 
obtain the running time bound in Theorem~\ref{thm:alg-1}.

Throughout this section and the following one, let $\bar{\mathcal{I}} = (\bar{V}, \bar{E}, \bar{t}, \bar{\gamma}, \bar{b})$ denote the initial instance provided to Algorithm~\ref{alg:roundmaxflow}; so $n=|\bar{V}|$ and $m=|\bar{E}|$.
Let $(\fs,\mus)$ denote the primal-dual pair satisfying \nameref{cond:integral} and for which $-1 < \nabla f^\mu_i - b^\mu_i < 2$ for all $i \in \bar{V} \setminus \{\bar{t}\}$, supplied by Algorithm~\ref{alg:roundmaxflow} to the first invocation of \RecursiveGenFlow. 
We use $\instance=(V, E, t, \gamma, b)$ to denote the instance and 
 $(f,\mu)$ to denote the fitting pair at some point in the execution of the algorithm.

\begin{lemma}\label{lem:excessbounds}
    $\Ex(f,\mu) \leq 2n$ and $\Deficit(f,\mu) \leq 3n$ throughout
    Algorithm~\ref{alg:overall}.
\end{lemma}
The proof will require the following property of the set $T$:
\begin{lemma}\label{lem:T}
 After a label update step in 
\textsc{Produce-Abundant-Arc} 
with $\sigmab > 0$, there is no tight arc in $E$  leaving the set $T$.
\end{lemma}
\begin{proof}
Let $\mu$ denote the labels before and $\mu'$ after the label update.
For a contradiction, let $ij\in E^{\mu'}$ with $i\in T$ and $j\in
V\setminus T$. Since $i$ has a tight path to $t$ w.r.t. $\mu$, it
follows that $\sigma_i=1$. Recall that
$\gamma_{ij}^{\mu'}=\gamma_{ij}^\mu \sigma_j/\sigma_i\le \sigma_j$
using that $\gamma_{ij}^\mu\le 1$. Hence, $ij\in E^{\mu'}_f$ yields
$\sigma_j=1$ as well as $\gamma_{ij}^\mu= 1$. This contradicts
Claim~\ref{cl:S0}. Indeed, $\sigma_j=1$ is equivalent to $j\in S_0$,
and thus $ij$ would be a tight arc w.r.t. $\mu$ from $T$ to
$S_0\setminus T$.
\end{proof}

\begin{proof}[Proof of Lemma~\ref{lem:excessbounds}] Let us start with the bound $\Ex(f,\mu) \leq 2n$.
Define
\[
    \globpot(f,\mu) := \sum_{i \in \Vm} \max\{\lceil \netflow{f^\mu}{i}-b^\mu_i \rceil, 2\};
\]
clearly $\Ex(f,\mu) \leq \globpot(f,\mu)$, and since $\nabla \fs^{\mus}_i -b^{\mus}_i < 2$ for all $i \in \bar{V} \setminus \{\bar{t}\}$, $\globpot(\fs,\mus) = 2(n-1)$.
We prove that $\globpot(f,\mu)$ is nonincreasing throughout the algorithm, implying the bound on $\Ex(f,\mu)$.

Consider a call to
\textsc{Produce-Abundant-Arc}.
A path augmentation may increase $\netflow{f^\mu}{i}$ only for $t$ or a node
$i\in Q$, where $\netflow{f^\mu}{i}<b_i^\mu$. 
Hence, $\globpot(f,\mu)$ is not increased by path augmentations. 
We claim that no term in $\globpot(f,\mu)$ increases during label update steps. 
For $i\in V\setminus \Sc$, $\netflow{f^\mu}{i}- b_i^\mu$ is unchanged.
For $i\in \Sc \setminus R$, Lemma~\ref{lem:plentiful-basic}~(\ref{item:bbounds}) implies that
$\netflow{f^\mu}{i}- b_i^\mu\le 1$ after the label update.
For $i \in \Sc \cap R$, the same
Lemma~\ref{lem:plentiful-basic}~(\ref{item:bbounds}) implies that $\mu_i$ is unchanged by the label update until
the call to \textsc{Round-Labels}, which does not increase $\lceil \nabla f^\mu_i - b^\mu_i\rceil$ by property~\rpref{item:rbounds}.

We also have, using Lemma~\ref{lem:contractadditive}, that for any $e=pq$ tight for $\mu$,
\begin{align*}
    &\globpot(f,\mu) - \globpot(f/e,\mu/e) \\
                    &\quad= \max\{ \lceil \nabla f^\mu_p - b^\mu_p\rceil, 2\} + \max\{\lceil \nabla f^\mu_q - b^\mu_q \rceil, 2\} -
\max\{\lceil \nabla f^{\mu}_p - b^{\mu}_{p} + \nabla f^\mu_q - b^\mu_q \rceil, 2\}\\
&\quad\geq 0. 
\end{align*}
Thus, $\globpot(f,\mu)$ does not increase upon a recursive call to the algorithm.

\medskip
We now show $\Deficit(f,\mu)\le 3n$.
Initially, $\Deficit(\bar{f},\bar{\mu}) \leq n$.
Similarly to the above, $\Deficit(f,\mu)$ does not increase upon contracting a tight arc.
Clearly a path augmentation will not increase the deficit of any node in $\Vm$.

By Lemma~\ref{lem:plentiful-basic}~(\ref{item:bbounds})
and the requirements on \textsc{Round-Labels}, $b^\mu_i - \nabla
f^\mu_i \leq 1$ for all $i \in  T$.
Lemma~\ref{lem:T} asserts that $\delta_{E^\mu}^-(V \setminus T) =
\emptyset$.
The safety of $\mu$ implies that $b^\mu(V\setminus T)\leq 0$
(Lemma~\ref{lem:safe-cut}). We further claim that 
$f(\delta^+(V\setminus T))=0$. Indeed, if there were an arc
$ij\in\delta^+(T)$ with $f_{ij}>0$, then both $ij,ji$ would have been
tight edges in $E_f$ before the label update. This contradicts the
definition of $T$, since there would have been tight $j$-$t$ and
$t$-$j$ paths before the label update, both via $i$.
The above imply
\[ \sum_{i \in V\setminus T} \nabla f^\mu_i-b_i^\mu
=f(\delta^-(V\setminus T))-f(\delta^+(V\setminus T))-b^\mu(V\setminus T) \ge 0. \]
It follows that
\begin{align*} 
\Deficit(f,\mu)&=\sum_{i \in T} \max\{b^\mu_i - \nabla
f^\mu_i,0\}+\sum_{i \in V\setminus T} \max\{b^\mu_i - \nabla
f^\mu_i,0\}
\\
&\leq 
|T|+\sum_{i \in V \setminus T} \max\{\nabla f^\mu_i - b^\mu_i,0\} \\
&\leq |T|+\Ex(f,\mu)\le 3n.
\end{align*}
\end{proof}

We measure progress via the potential
\[
\apot(f,\mu) := \sum_{i \in \VP} \netflow{f^\mu}{i} \;-\; \sum_{i \in \VN} \netflow{f^\mu}{i}. 
\]

\begin{lemma}\label{lem:potbound}
    At any point in the algorithm where a path augmentation is performed, 
    $|\apot(f,\mu)| \leq 5 n^2$. 
\end{lemma}
\begin{proof}
    Since $\Ex(f,\mu) + \Deficit(f,\mu) \leq 5n$ throughout the algorithm, and $|\supp(f)| \leq n-1$, if $|\apot(f,\mu)| \geq 5(n-1)n$, then there must be some abundant arc.
    No path augmentation is performed if there are any abundant arcs, so the claim follows. 
\end{proof}
Let us now examine how $\apot(f,\mu)$ changes during iterations of \textsc{Produce-Abundant-Arc}.
Clearly only path augmentations have any effect, since $f^\mu$ is unchanged in the label update part.

Call a path augmentation that begins at a node in $\VN \cup \{t\}$ and ends at a node in $\VP \cup \{t\}$ a \emph{helpful} augmentation, 
and all other augmentations (including all null augmentations) \emph{unhelpful}.
Note that an unhelpful augmentation decreases $\apot(f,\mu)$ by at most $2$.
Any helpful augmentation, on the other hand, increases $\apot(f,\mu)$ by either $1$ or $2$.

\begin{lemma}\label{lem:unhelpful}
    There are at most $6n$ unhelpful augmentations in any call to \textsc{Produce-Abundant-Arc}.
\end{lemma}
\begin{proof}
    Let $(\tilde{f},\tilde{\mu})$ be the fitting pair at the start of the call.
    Consider any $i \in \VP$.
    Label update steps can only decrease the (relabelled) excess $\nabla f^\mu_i - b^\mu_i$ (by Lemma~\ref{lem:plentiful-basic}~(\ref{item:muincreases})),
    and a helpful augmentation will increase this excess, but only to a value below $1$.
    An unhelpful augmentation involving $i$, on the other hand, must decrease its excess, from a value of at least $1$.
    It follows that the number of unhelpful augmentations involving $i$ is not more than the excess $\max\{\nabla \tilde{f}^{\tilde{\mu}}_i - b^{\tilde{\mu}}_i,0\}$ of $i$ at the start of the call.
    Hence the total number of unhelpful augmentations starting from nodes in $\VP$ is at most $\Ex(\tilde{f},\tilde{\mu})$.

    Similarly, the total number of unhelpful augmentations ending at a node in $\VN$ is at most $\Deficit(\tilde{f}, \tilde{\mu})$.
    By Lemma~\ref{lem:excessbounds}, there are at most $5n$ unhelpful augmentations overall.
    
    Finally, the number of null augmentations is at most $n$.
    After a node in $\VN$ leaves $Q$ it will never re-enter, since label updates only increase its excess.
\end{proof}

This already gives a strongly polynomial bound on the number of augmentations.
Consider any sequence of path augmentations between contractions (or finding an essentially optimal fitting pair).
The value of $\apot(f,\mu)$ lies between $-5n^2$ and $5n^2$ throughout this sequence, and aside from at most $6n$ unhelpful augmentations that increase $\apot(f,\mu)$ by at most $12n$ in total, each augmentation decreases $\apot(f,\mu)$ by at least $1$.
So there can be at most $O(n^2)$ augmentations between contractions, and hence at most $O(n^3)$ augmentations in total.
In the case of dense graphs ($m=\Theta(n^2)$) this 
is sufficient to obtain the claimed running time of Theorem~\ref{thm:alg-1}.
For sparser graphs, a more refined analysis is needed.

\subsection{A refined bound on the number of augmentations}\label{sec:refined}
In this section, we prove the following more refined bound needed for Theorem~\ref{thm:alg-1}.
\begin{theorem}\label{thm:refined-aug}
    There are at most $O(mn \log (n^2/m))$ augmentations throughout the execution of Algorithm~\ref{alg:overall}.
\end{theorem}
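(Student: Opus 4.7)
Denote by $R$ the total number of augmentations performed by Algorithm~\ref{alg:overall}. The plan is to sharpen the bookkeeping of the preceding subsection. Summing the inequality $r_k \leq 4n + (\Psi_k^{\text{end}} - \Psi_k^{\text{start}})$ from Lemma~\ref{lem:plentiful-bound} across the $K \leq n-1$ invocations of \textsc{Produce-Plentiful-Node} and telescoping yields
\[
    R \;\leq\; 4nK + \Psi_{\text{final}} - \Psi_{\text{initial}} + D,
\]
where $D$ is the cumulative drop of $\Psi$ over all intervening \textsc{Reduce} steps. By Lemma~\ref{lem:psi-upper} the first three terms contribute $O(n^2 + nm)$, so the target inequality reduces to proving $D = O(nm \log(n^2/m))$.

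To bound $D$, I would analyze \textsc{Reduce} more carefully. Contracting an arc $pq$ replaces $b^\mu_q$ by $b^\mu_p + b^\mu_q$, so $\Psi$ can decrease only when $b^\mu_p$ and $b^\mu_q$ have opposite signs and the merged balance lies outside $\VN$; the decrease is then at most $\min(|b^\mu_p|,|b^\mu_q|)$. My approach would be a bucketed amortization: partition each node's possible $|b^\mu|$ values into dyadic scales up to the plentiful threshold $O(n d_i)$, and charge the decrease caused by a contraction to the current scale of the lighter endpoint. The factor $\log(n^2/m)$ should emerge from the total number of relevant scales across nodes, combined with $\sum_i d_i = 2m$ and a concavity estimate. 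The fact that any contracted arc $e$ must carry relabelled flow exceeding $\Ex(g,\mu)+\Deficit(g,\mu) = O(n)$ (Lemma~\ref{lem:abundant} and Lemma~\ref{lem:xi}) imposes additional structure on which pairs $(b^\mu_p, b^\mu_q)$ can arise at a contraction, which I would exploit to bound the total charge falling in each bucket.

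An alternative and possibly cleaner route is to introduce a monotone non-decreasing auxiliary potential $\hat\Psi \geq \Psi$, whose total growth over the execution is directly bounded by $O(nm \log(n^2/m))$, and to re-derive the per-call bound of Lemma~\ref{lem:plentiful-bound} using $\hat\Psi$ in place of $\Psi$. This avoids the need to track decreases of $\Psi$ and absorbs the contributions of \textsc{Reduce} into the same potential argument.

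The principal obstacle will be implementing either scheme rigorously. A single contraction can cause the balance at the surviving endpoint to jump across several dyadic scales at once (when a large imbalance is merged into a small one of opposite sign), and a careful argument is needed to prevent such cascades from invalidating the per-scale charging. A secondary difficulty is pinning down the precise dependence $\log(n^2/m)$, rather than the cruder $\log n$: this likely requires exploiting the structure of the contracted graph's degree sequence (through $\sum_i d_i = 2m$) in the right way, possibly by coupling the analysis to the shrinking size of the instance at each successive call rather than treating the sizes uniformly as $n$ and $m$.
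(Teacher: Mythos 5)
Your first step---reducing the theorem to showing that the cumulative drop $D$ of $\Psi$ over all \textsc{Reduce} calls is $O(mn\log(n^2/m))$, via Lemma~\ref{lem:plentiful-bound}, Lemma~\ref{lem:psi-upper}, and telescoping over the at most $n$ calls to \textsc{Produce-Plentiful-Node}---is correct and is exactly how the paper proceeds. But the heart of the theorem is the bound on $D$, and there you have only a plan, not a proof; the obstacles you flag at the end are real and are not resolved by the dyadic scheme you propose. Charging $\min(|b^\mu_p|,|b^\mu_q|)$ to ``the current scale of the lighter endpoint'' gives no control on how many times a given scale of a given (super)node is charged: the surviving endpoint can be the lighter one, survive the merge, have its balance reshaped by subsequent label updates (which push $|b^\mu_i|$ for $i\in\VN$ back up toward the plentiful threshold before the next \textsc{Reduce}), and be charged again at the same scale. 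This is precisely the cascade you mention, and without resolving it the argument does not close. You also do not treat the case where the sink $t$ lies in a contracted part, in which case the decrease is not $\min(|b^\mu_p|,|b^\mu_q|)$ but the full weight of the absorbed negative nodes.

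The paper's resolution is worth comparing against. It sets up a token game on the multiset of weights $\tau_i=\sum_{j\in\Gamma_i}(3nd_j+3n+3)$, with total mass $M=\Theta(mn)$, and crucially uses \emph{two} different mechanisms: (a) a cap $N=3n^2+3$ on any individual $|b^\mu_i|$ after a \textsc{Reduce} (coming from the simple-graph degree bound, not from $\tau_i$), which handles merges of ``heavy'' elements---there can be at most $M/N$ of these, each worth at most $N$; and (b) for merges of two light elements, the profit $\min(y,z)$ is charged to the side whose \emph{preimage cardinality} $|\Gamma_y|$ is smaller, not the side whose value is smaller. The smaller-half rule guarantees each original node is charged at most $\log|\Gamma_y|$ times, and Jensen's inequality then converts $\sum_{y}y\log|\Gamma_y|$ into $M\log(2N|U|/M)=O(mn\log(n^2/m))$; this is where the exponent $n^2/m$ actually comes from (as $\log(N\cdot n/M)$), not directly from $\sum_i d_i=2m$ plus concavity as you guessed. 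The switch from value-based to cardinality-based charging is the missing idea: it is what makes the per-element charge count logarithmic and immune to the cascades that break the value-based bucketing.
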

%

In order to obtain this improved bound, we investigate how $\apot(f,\mu)$ changes through contractions.
In fact, we consider a different (but related) potential. 
Let
\[ \npot(f,\mu) := \sum_{i \in \Vm} |\nabla f^\mu_i|. \]
\begin{lemma}
    If the total decrease of $\npot$ due to contractions over the algorithm is $\Delta$, then the number of augmentations is $\Delta + O(n^2)$.
\end{lemma}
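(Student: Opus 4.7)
The plan is to combine Lemma~\ref{lem:plentiful-bound} with a simple telescoping argument on $\Psi$. Let $K$ be the number of iterations of the main loop in Algorithm~\ref{alg:overall}, and let $r_k$, $P_k$ denote respectively the number of augmentations and the increase in $\Psi$ during the $k$th invocation of \textsc{Produce-Plentiful-Node}. Lemma~\ref{lem:plentiful-bound} yields $r_k \leq P_k + 4n$, so summing, the total number of augmentations is at most $P + 4nK$, where $P := \sum_k P_k$ is the total increase of $\Psi$ inside \textsc{Produce-Plentiful-Node} calls.

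Next I would argue that $P \leq \Delta$. The only places $\Psi$ can change are inside \textsc{Produce-Plentiful-Node} (where it is monotone nondecreasing by Lemma~\ref{lem:plentiful-bound}) and during contractions in \textsc{Reduce}. The small subclaim I need is that $\Psi$ never increases during a contraction: when a tight arc $pq$ is contracted into $q$, the relabelled demand $b^\mu_q$ is replaced by $b^\mu_p + b^\mu_q$ (using $\gamma_{pq} = \mu_q/\mu_p$ from tightness), and a quick case check on the signs of $b^\mu_p, b^\mu_q$ shows the corresponding contribution to $-\sum_{i \in \VN} b^\mu_i$ cannot grow (e.g., if $b_p < 0$ and $b_q > 0$, either the merged value stays negative and the positive quantity $-b^\mu_q$ is lost from $\Psi$, or it becomes nonnegative and the entire contribution $-b^\mu_p$ is lost); the case $p = t$ simply drops a term. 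This sign-case analysis is the only slightly technical piece, but each case is immediate. Telescoping then gives $P - \Delta = \Psi_{\text{end}} - \Psi_{\text{start}}$, where ``start'' and ``end'' refer to the entry and exit of the main loop. Since the loop exits precisely when $\VN \cup \VP = \emptyset$, we have $\VN = \emptyset$ at the end and hence $\Psi_{\text{end}} = 0$; since $\Psi \geq 0$ throughout, this gives $P \leq \Delta$.

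Finally, I would bound $K$. Every iteration of the main loop ends with a call to \textsc{Reduce}, which by Theorem~\ref{thm:main-progress} applied to the plentiful node produced by the preceding \textsc{Produce-Plentiful-Node} call contracts at least one arc and so removes a node from the instance. Hence $K \leq n-1$, so $4nK = O(n^2) = O(mn)$, using $m \geq n-1$ from connectedness. Putting everything together, the total number of augmentations is at most $P + 4nK \leq \Delta + O(mn)$, as required.
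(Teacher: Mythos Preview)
Your argument is correct and follows the same telescoping scheme on $\Psi$ combined with Lemma~\ref{lem:plentiful-bound} that the paper uses. The only (harmless) difference is that you observe $\Psi_{\text{end}}=0$ from the termination condition $\VN=\emptyset$, whereas the paper instead bounds $\Psi_{\text{end}}=O(mn)$ via Lemma~\ref{lem:psi-upper}; your explicit check that $\Psi$ cannot increase under a contraction is something the paper uses implicitly later in Section~\ref{sec:refined}.
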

\begin{proof}
We first show that $\npot(f,\mu)$ and $\apot(f,\mu)$ are always close.
Consider the contribution of any node $i \in \Vm$ to $\npot(f,\mu)$ and to $\apot(f,\mu)$.
The contributions are equal if $\nabla f^\mu_i$ and $b^\mu_i$ are both positive or both negative, and otherwise, the contributions differ by at most $2|\nabla f^\mu_i|$.
If $\nabla f^\mu_i \geq 0$ and $b^\mu_i \leq 0$, then $|\nabla f^\mu_i| \leq |\nabla f^\mu_i - b^\mu_i|$, and similarly if $\nabla f^\mu_i \leq 0$ and $b^\mu_i \geq 0$.
Thus
\[ |\npot(f,\mu) - \apot(f,\mu)| \leq 2\sum_{i \in \Vm}|\nabla f^\mu_i - b^\mu_i| \leq 8n, \]
where the final inequality follows from Lemma~\ref{lem:excessbounds}.

The decrease in $\apot$ due to a contraction can thus be bounded by the decrease in $\npot$ plus $16n$.
Thus the total decrease in $\apot$ due to contractions is bounded by $\Delta + 16n^2$.

    Lemma~\ref{lem:potbound} shows that $|\apot(f,\mu)| \leq 5n^2$ at any point where an augmentation occurs.
    Since by Lemma~\ref{lem:unhelpful} there are $O(n)$ unhelpful augmentations per contraction, and hence $O(n^2)$ in total, the total decrease in $\apot$ due to unhelpful augmentations is at most $O(n^2)$.
    Since each helpful augmentation increases $\apot$ by at least one, the lemma follows.
\end{proof}

For the remainder of this section, it will be convenient to think of contractions as preserving the arc set;
thus when contracting an arc $e=pq$, the resulting graph may be non-simple, and arc $e$ will become a loop.
The sum of the indegrees (or outdegrees) of the graph at any moment will then always be precisely $m$. 

Consider some instance and fitting pair $(f,\mu)$ before contracting
an abundant arc $e$ with endpoints $p$ and $q$ (that is, $e=pq$ or $e=qp$).
By swapping the labels if necessary, we may assume that $p \neq t$.
Let $(f', \mu')$ be the resulting fitting pair after contraction.
If $t = q$, then 
\[ \npot(f,\mu) - \npot(f',\mu') = |\nabla f^\mu_p|. \]
If $t \neq q$, then
\begin{align*} 
    \npot(f,\mu) - \npot(f',\mu') &= |\nabla f^\mu_p| + |\nabla f^\mu_q| - |\nabla f^\mu_p + \nabla f^\mu_q| \\
                                &\leq 2\min\{|\nabla f^\mu_p|, |\nabla f^\mu_q|\}.
\end{align*}
Let $d_i = |\delta(i)|$ denote the total degree of $i \in V$ (a loop counts twice towards the degree).
Now since $f^\mu_a \leq 5n$ for all $a \in E$, and $f$ is acyclic, we deduce that $|\nabla f^\mu_i| \leq 5n\min\{d_i,n\}$ for all $i \in \Vm$.
Hence
\[ \npot(f, \mu) - \npot(f',\mu') \leq \begin{cases} 5n \min\{d_p,n\} & \text{ if $t=q$}\\
        10n\min\{d_p, d_q,n\} &\text{ if $t \neq q$}
    \end{cases}.
\]
We now charge this potential decrease according to the following scheme.
Each token has a value of $10n$.
If $t=q$, give a token to each arc in $\delta(p)$ (a loop at $p$ receives two tokens). 
If $t \neq q$, compare $d_p$ and $d_q$; assume by relabelling if necessary that $d_p \leq d_q$.
If $d_p \geq n$, we give $n$ tokens to a central store, 
and if $d_p < n$, then give a token to each arc in $\delta(p)$ (again, a loop receives two tokens).
The total value of tokens assigned is then at least the potential decrease.

\begin{lemma}
    The total number of tokens assigned over all contractions is at most $O(m\log(n^2/m))$, and hence the total decrease in $\npot$ is at most $O(mn\log(n^2/m))$.
\end{lemma}
\begin{proof}
    Consider the value $\kappa := \sum_{i \in \Vm} \min\{d_i,n\}$.
    Each time the central store receives $n$ tokens, $\kappa$ must decrease by $n$.
    Since $\kappa$ is throughout between $0$ and  $2m$ and is
    nonincreasing, the central store receives at most $2m$ tokens in
    total. 

    Consider an arc $a=ij \in \bar{E}$.
Each endpoint of $a$ will be contracted with the root at most once, and thus receives at most two tokens in this fashion.
Every other time that $a$ receives a token, one endpoint of $a$ doubles its degree (including loops).
Once an endpoint has degree $n$ or more, it does not cause $a$ to be charged further.
Thus $a$ receives at most $\lceil\log_2(n/d_i)\rceil + \lceil\log_2(n/d_j)\rceil + 2$ tokens ($d_i$ refers to the degree of node $i \in \bar{V}$).

Applying Jensen's inequality to the concave function $x\log_2(n/x)$, the total number of tokens assigned to arcs is at most
\[ \sum_{ij \in \bar{E}} (\log_2(n/d_i) + \log_2(n/d_j) + 4) =
        2\sum_{i \in \bar{V}} d_i\log_2(n/d_i) + 4m
        \leq 4m\log_2(n^2/2m) + 4m.
\]
\end{proof}
This completes the proof of Theorem~\ref{thm:refined-aug}.


\subsection{Bounding the work per augmentation}\label{sec:numerics}
In this section, we complete the proof of Theorem~\ref{thm:alg-1}.
First, we note that the algorithm requires $O(m + n\log n)$ arithmetic operations between augmentations.
As already observed, this is the number of operations needed for the call to \textsc{Highest-Gain}. 
Finding an augmenting path if one exists, as well as computing $T$,
can be done with breadth-first searches in time $O(m)$. Computing the
values of $\gaux{i}$ and updating the labels takes time $O(n)$.
The only other nontrivial step is \textsc{Round-Labels}, which we have not yet defined.
We will do so below---it will also require $O(m+n\log n)$
operations. The running time bound claimed in Theorem~\ref{thm:alg-1}
then follows by Theorem~\ref{thm:refined-aug}.

The final step to showing that our algorithm is strongly polynomial is 
to demonstrate, with an appropriately defined \textsc{Round-Labels} subroutine, that all numbers appearing during the algorithm 
have encoding lengths polynomially bounded in the input size. 
This was a nontrivial challenge in the previous strongly polynomial algorithm~\cite{laci}.
For our algorithm, we will see that this can be done rather cleanly. 

Consider an instance $\mathcal{I}=(V,E,t,\gamma,b)$, such that $\gamma\in \Q_{++}^{E}$ and $b\in \Q^{\Vm}$.
Let $B$ be an integer that exceeds the largest numerator or denominator appearing in the description of the $\gamma_e$ and $b_i$ values as
ratios of two integers.

We do not need to work with  the flow $f$ directly, but maintain the relabelled flow $f^\mu$ instead. 
This remains integral throughout, except at the very beginning and in the final computation of a primal solution.
Moreover, the values $f^\mu_e$ are strongly polynomially bounded.
Indeed, whenever $f^\mu_e>5n$ then $e$ is abundant according to Lemma~\ref{lem:excessbounds}.
Let us now turn to the labels $\mu$. 
\begin{definition}
Given a labelling $\mu \in \Q_{++}^V$, we say that a node $i \in \VP \cup \VN$ is an \emph{anchor} if $b_i / \mu_i$ is an integer, bounded by $5n^2$ in absolute value.
We also say that a node $i \in V^0 \cup \{t\}$ is an anchor if $\mu_i = 1$.
We say that a node $j \in V$ is \emph{anchored} if there is a path $P$ in $\obe{E^{\mu}}$ between $j$ and an anchor node.
\end{definition}
Note that if $i$ is an anchor, then $\mu_i$ can be written as a fraction with numerator and denominator both bounded by $5n^2B$.
If $j$ is anchored, then it can be written as $\gamma(P)\cdot \mu_i$ for some path $P \in \obe{E}$ and anchor $i$, and hence it may be written as a fraction with numerator and denominator both bounded by $5n^2B^n$.
We will construct \textsc{Round-Labels} in such a way that all nodes are anchored after the call, providing us with the required PSPACE guarantee.

The \textsc{Round-Labels} procedure is defined in Algorithm~\ref{alg:roundlabels}. 
It modifies the labels in two stages. 
In the first stage, we \emph{decrease} the label of certain nodes until they become anchored. 
The update is carried out via the subroutine \textsc{Highest-Gain}; every arc that was already tight remains tight during the update. 
Some nodes may become anchors due to the label decrease, 
whereas other nodes may become anchored as they become connected to an anchor node by a tight path.

During this first stage,  \textsc{Highest-Gain} obtains positive
multipliers $\theta_i>0$ for a subset $\hat V\subseteq V$; however, we
may get $\theta_i=0$ for a nonempty set $V\setminus \hat V$. This
means that nodes in $V\setminus \hat V$
cannot become anchored even if their labels are arbitrarily
decreased; in particular, every such node must be in $\VZ\cup\{t\}$,
and $\delta^+_E(\hat V)=\emptyset$. 
For these nodes, the first stage multiplies the labels by a factor $\theta^* \leq 1$ so that
$\gamma_{ij}^{\mu'}\le 1$ for all $ij\in \delta^-_E(\hat V)$, and also $\mu'_i \leq 1$ for all $i \in V \setminus \hat{V}$. 

The second stage of \textsc{Round-Labels} calls \textsc{Highest-Gain}
a second time, to \emph{increase} the labels on $V\setminus \hat V$, until they become anchored (labels in $\hat{V}$ are unaffected). 
Since $V \setminus \hat{V} \subseteq \VZ \cup \{t\}$, this does not cause difficulties with property \rpref{item:rbounds}.
Note that the second stage returns strictly positive multipliers $\kappa_i>0$ since a node $i\in
\VZ\cup\{t\}$ becomes an anchor once $\mu_i=1$.

\begin{algorithm}
    \caption{\sc Round-Labels}\label{alg:roundlabels}
    \raggedright
    \begin{algorithmic}[1]
        \LineComment{First stage}
        \State $(V', E') \gets (V \cup \{t'\}, \ole{E} \cup E^\mu \cup
            \{it' : i \in \VP\cup \VN \})$.
        \State   $\gamma'_{ij} \gets 1/\gamma^\mu_{ij}$ for all $ij\in \ole{E}\cup E^\mu$;
        $\gamma'_{it'} \gets |b^{\mu}_i|/ \lceil |b^\mu_i| \rceil$ for all $i \in \VP \cup \VN$.
        \State $\rndgain \gets \Call{\textsc{Highest-Gain}}{V', E', \gamma', \{t'\}}$.
        \State $\hat{V} \gets \{ i \in V : \rndgain_i > 0\}$; $\rndgain^*\gets \min\bigl\{\, 1,\, \min_{i \in V \setminus \hat{V}} \mu_i^{-1},\, \min_{ij\in \delta^-_E(\hat V)} \theta_j/\gamma_{ij}^\mu\,\bigr\}$.
        \State $\mu'_i \gets \mu_i \rndgain_i$ for all $i \in \hat{V}$;
        $\mu'_i \gets \mu_i \rndgain^*$ for all $i\in V\setminus \hat{V}$.

        \vskip.2cm

        \LineComment{Second stage}
        \State $(\bar V, \bar E) \gets (V \cup \{\bar t\}, E\cup \ole{E^{\mu}} \cup \{i\bar t: i\in V\setminus \hat V\}$). 
        \State  $\bar \gamma_{ij} \gets \gamma^{\mu'}_{ij}$ for all $ij\in E\cup \ole{E^{\mu}}$;
            $\bar \gamma_{i\bar t} \gets \mu'_i$ for all $i \in V\setminus \hat V$.
        \State  $\kappa \gets \Call{\textsc{Highest-Gain}}{\bar{V}, \bar{E}, \bar{\gamma}, \hat V\cup \{\bar{t}\}}$.
        \State  $\mu'_i \gets \mu_i'/\kappa_i$ for all $i\in V\setminus \hat{V}$. 

        \vskip.2cm
        
        \State $f'_{ij} \gets f_{ij} \cdot \mu'_i / \mu_i$ for all $ij \in E$. \label{alg:funchanged} 
        \State \Return{$(f',\mu')$}.
    \end{algorithmic}
\end{algorithm}

\begin{lemma}
    The output $(f', \mu')$ of \textsc{Round-Labels} satisfies all the required properties \rpref{item:rmaintain} through \rpref{item:rbounds}.
\end{lemma}
\begin{proof}
~
    \begin{enumerate}[\em(R1)]
        \item \emph{$f^{\prime\mu'} = f^\mu$}:
            This is immediate from the definition of $f'$ in line~\ref{alg:funchanged}.
    
        \item \emph{$(f', \mu')$ is a fitting pair with $E^{\mu'} \supseteq E^{\mu}$}:
            It suffices to show that $\gamma^{\mu'}_{ij} \leq 1$ for all $ij \in E \cup \ole{E^\mu}$.

            If $i,j \in \hat{V}$, then $ji \in E'$; by the definition of \textsc{Highest-Gain}, $\rndgain_j \geq \rndgain_i \gamma'_{ji}$.
            Recalling that $\gamma'_{ji} = 1/\gamma^{\mu}_{ji}=\gamma^{\mu}_{ij}$, we have $\gamma^{\mu'}_{ij} = \gamma^{\mu}_{ij} \rndgain_i / \rndgain_j \leq 1$.
            If $i,j \in V \setminus \hat{V}$, an essentially identical argument holds: $ij \in \bar{E}$, and so $\kappa_i \geq \kappa_j \bar{\gamma}_{ij}$, implying that $\gamma^{\mu'}_{ij} \leq 1$.
        There are no arcs in $E \cup \ole{E^\mu}$ leaving $\hat{V}$, so the only remaining case is that $i \in V \setminus \hat{V}$, $j \in \hat{V}$. 
        By the definition of \textsc{Highest-Gain},
        $\kappa_i \leq \bar{\gamma}_{ij}$, from which the claim follows immediately.
    
        \nnote{It's somewhat awkward here that $\mu'$ changes in the algorithm, so it's awkward to refer to the value of $\mu'$ before line 9. In the above I avoid it, but it may be better to just define a $\hat{\mu}_i$ or something for the intermediate value.}
        



        \item \emph{for all $i \in \Vm$, 
            $|b^\mu_i| \leq |b^{\mu'}_i| \leq \lceil |b^\mu_i| \rceil$}:
        This is trivial for $i \in \VZ$.
            Consider any node $i \in \VP \cup \VN$. Since $\theta_i\le
            1$, we have $|b^{\mu'}_i| \geq |b^\mu_i|$.
    Furthermore,
$|b^{\mu'}_i| = |b^\mu_i| / \rndgain_i \leq |b^\mu_i| / \gamma'_{it'} = \lceil |b^\mu_i| \rceil$.
    \end{enumerate}
\end{proof}
\begin{lemma}
    Every node is anchored in the labelling $\mu'$ returned by \textsc{Round-Labels}.
\end{lemma}
\begin{proof}
    First, consider any node $i \in \VP \cup \VN$ for which $\rndgain_i = \gamma'_{it'}$.
    Then $b^{\mu'}_i = b^{\mu}_i \cdot \gamma'_{it'}$ is either the round up or round down of $b^\mu_i$.
    Further, we have
    \[ |b^\mu_i| \leq |\nabla f^\mu_i| + |\nabla f^\mu_i - b^\mu_i| \leq 5n(n-1) + 3n \leq 5n^2, \]
since $f_e \leq 5n$ for all $e \in E$, and $\Ex(f,\mu)$ and $\Deficit(f,\mu)$ are both bounded by $3n$.
    Thus $i$ is an anchor for $\mu'$.

    Now consider any $i \in \hat{V}$, and let $P$ be   a highest gain path to $t'$ in $(V', E')$.
    Then the penultimate vertex of $P$ is an anchor, and all preceding arcs of $P$ are tight with respect to $\mu'$; thus $i$ is anchored in $\mu'$.

    Finally, consider any $i \in V \setminus \hat{V}$, and let $P$ be
    the highest gain path to $\hat V \cup \{ \bar{t} \}$ in $(\bar{V}, \bar{E})$.
    If the  final arc of $P$ is $j\bar{t}$, then $\mu'_j = 1$ and $i$
    is anchored.
Otherwise, $P$ connects $j$ to a node $k\in \hat V$. Either $k$ itself
is an anchor, or it is connected to an anchor $k'\in \hat V$ by a
tight path $P'$; thus, the concatenation of $P$ and $P'$ connects $i$
to an anchor, and the same argument applies as above.
\end{proof}




\section{Phase one: finding a feasible solution}\label{sec:first-phase}

We are now ready to complete the proof of Theorem~\ref{thm:main}. 
Consider an arbitrary input instance
$\mathcal{I}=(V,E,t,\gamma,b)$, without the assumption \nameref{cond:init} on the existence of an initial fitting pair
$(\bar f,\bar \mu)$. 
We present a two-phase algorithm, similar to the two-phase simplex method.
The first phase may conclude infeasibility or unboundedness, and otherwise
returns an initial fitting pair
$(\bar f,\bar \mu)$ on a subset of the original node set. The second
phase solves the flow maximization problem starting from $(\bar f,\bar
\mu)$.
The main work in both phases will be an execution of Algorithm~\ref{alg:roundmaxflow}.
\nnote{This two-phase structure is slightly less explicit than before. Not sure if that's a problem.}\lnote{I think this is OK.}

For a node set $W \subseteq V$ with $t\in W$, we let $\cI[W]$ denote the
restriction of the instance to $W$, obtained by deleting all nodes in
$V\setminus W$ and the incident arcs.

 \modified{We first describe the algorithm for the case when there are no flow generating cycles in the instance. This is extended later to instances with flow generating cycles. Note that since we work with uncapacitated instances, a flow generating cycle can create unlimited amounts of  flow.}

\paragraph{No flow generating cycles.}
\modified{Let us first assume that  $\gamma(C) \leq 1$ for all cycles $C \subseteq E$.}

Let $\Vt$ denote the set of nodes reachable from $t$ (this set can be identified using depth-first search).
We can satisfy all demands in $\Vt$ from $t$.
More precisely, we construct a fitting pair $(\ft, \mut)$ to $\cI[\Vt]$ with $\ft$ feasible as follows. 
Let $\mut_t:=1$, and for every $i\in \Vt\setminus\{t\}$, we define
\begin{equation}\label{def:init-mu-t}
 \mut_i:=\max\{\gamma(P)\colon P \text{ is a directed walk
 from $t$ to $i$}\}.
\end{equation}
This is well-defined and finite since there are no flow generating cycles.%
\footnote{This is called a
  \emph{canonical labelling from $t$} in \cite{Goldberg91}.} 
We can efficiently compute the values $\mut_i$ via the multiplicative
adaptation of Dijkstra's algorithm. 
It is immediate by the definition that $\mut$ is a feasible labelling for $\cI[\Vt]$, 
and that there exists a tight $t$-$i$ path $P_i$ for every $i\in \Vt$ w.r.t.\ $\mut$. 
For every $i\in \Vt\cap \VP$, we let $g^{(i)}$ be the flow sending $b_i$ units on $P_i$ to the node $i$
(that is, the relabelled flow will be $b_i^{\mut}$ on every arc of $P_i$). 
We let $\ft:=\sum_{i\in \Vt\cap \VP} g^{(i)}$.
Clearly, $(\ft,\mut)$ is a fitting pair.

Next, we consider $\Vr := V \setminus \Vt$.
For each $i\in \Vr$, let
\begin{equation}\label{def:init-mu}
 \muri_i:=\max\{\gamma(P)\colon P \text{ is a directed walk ending in } i\},
\end{equation}
with the convention that  $\gamma(P)=1$ for a walk $P$ comprising a single node.
Again, this can be computed via the multiplicative adaptation of Dijkstra's algorithm.
Since there are no flow generating cycles, $\muri$ is well defined and finite,
and $\muri_i\gamma_{ij}\le \muri_j$ for every $ij\in E[\Vr]$.

Now construct the instance 
$\mathcal{I}'=(\Vr \cup \{t'\},E',t',\gamma',b')$ as follows.
Let $E':=E[\Vr] \cup \{t'j\colon j\in \Vr \cap \VP\}$,
$b'_i:=b_i$ for all $i\in \Vr$, and $\gamma'_e:=\gamma_e$ for all $e\in E[\Vr]$. 
For the new arcs $t'j$, we let $\gamma'_{t'j}:=\muri_j$.

Extend $\muri$ to a dual solution for $\mathcal{I}'$ by setting $\muri_{t'} := 1$.
Define a primal solution $\fri$ by setting $\fri_e := 0$ for all arcs $e \in E$, and 
$\fri_{t'j} := b_j / \mu_j$ for the arcs in $\delta^+(t')$.
Note that this flow is feasible, since
$\netflow{\fri}{j}=\max\{b_j,0\}$ for every $j\in \Vr$.
Also, $\fri$ fits $\muri$, since all arcs in $\delta^+(t')$ are
tight.
Thus $(\fri, \muri)$ provides the initial fitting pair for $\instance'$ required by \nameref{cond:init}.

Using this, we can use Algorithm~\ref{alg:roundmaxflow}
to obtain an essentially optimal fitting pair $(\fr, \mur)$ for $\instance'$.

\nnote{I'm tempted to unwrap this lemma.}
\begin{lemma}\label{lem:first-phase-feasible}
The original instance $\cI$ is feasible if and only if
$\fr(\delta^+(t'))=0$.
\end{lemma}
\begin{proof}
Let us first assume $\fr(\delta^+(t'))=0$.
We obtain a feasible flow $\bar f$ in $\cI$ by setting 
$\bar f_e:=\ft_e$ if $e\in E[\Vt]$, $\bar f_e:=\fr_e$ if
$e\in E[\Vr]$, and $\bar f_e:=0$ otherwise. 

In the other direction, assume there is a feasible solution $\bar f$ to $\cI$. 
Then we obtain a feasible solution $f'$ to
$\mathcal{I'}$ by setting $f'_{ij}= \bar f_{ij}$ for $ij\in
E[\Vr]$ and $f'_{t'j}=0$ for all $t'j\in E'$. Note that
$\delta^-(\Vr)=\emptyset$ by the definition of $\Vt$.
Thus, since $\delta^-(t') = \emptyset$ and $\fr$ is optimal, $\fr(\delta^+(t')) \leq f'(\delta^+(t')) = 0$.
\end{proof}

If we determine that $\cI$ is feasible, we are now in a position to provide an initial fitting pair $(\bar f,\bar \mu)$ as follows.
Define $\bar f$ as in the above proof; this is then feasible.
Note that $\delta^-(\Vr)=\emptyset$. 
If there are no arcs from $\Vr$ to $\Vt$,
then we define $\bar \mu_i:=\mut_i$ if $i\in \Vt$
and $\bar \mu_i:=\mur_i$ if $i\in \Vr$. 
Otherwise, we let $\delta:=\max_{i\in \Vr, j\in  \Vt} \mur_i\gamma_{ij}/\mut_j$,
and define  $\bar \mu_i:=\delta\mut_i$ if $i\in \Vt$
and $\bar \mu_i:=\mur_i$ if $i\in \Vr$. 
In either case, it is easy to see that $(\bar f, \bar \mu)$ is a fitting pair satisfying \nameref{cond:init} for $\cI$; ``phase one'' is complete.
We can then proceed with ``phase two'' and apply Algorithm~\ref{alg:roundmaxflow} to obtain an essentially optimal fitting pair.

\paragraph{The general case (flow generating cycles allowed).}
Let us call a node $i\in V$ \emph{flooded} if there exists a flow
generating cycle $C\subseteq E$ (that is, $\gamma(C)>1$), along with a path 
$P\subseteq E$ connecting a node of $C$ to $i$. 
We can use $C$ and $P$ to generate arbitrary amounts of excess flow at node $i$; 
hence arbitrary demand $b_i$ can be met at a flooded node $i$.

We let $\Vg$ denote the set of flooded nodes.
This set can be efficiently identified as follows.
A flow generating cycle is a negative cycle with respect to the cost function
$c_e=-\log\gamma_e$. Hence we can adapt any negative cycle detection
subroutine (e.g. \cite[Chapter 5.5]{amo}) to find a negative cycle, or
conclude that none exists in $O(nm)$ time. We can use a multiplicative adaptation of
the negative cycle detection algorithms to avoid computations with logarithms.
If a flow generating cycle $C$ is found, then we include all nodes
incident to $C$ into $\Vg$, as well as all other nodes that are reachable on a directed path from $C$.
We remove every vertex added to
$\Vg$ from $V$, and repeat the same process. 
In $O(n)$ iterations, we correctly identify $\Vg$.

We can easily construct a vector $\fg \in\R_+^{E[\Vg]}$ with $\nabla \fg_i\ge b_i$ for each $i\in \Vg$ as
follows. For every node $i\in \Vg\cap \VP$, the algorithm identifies a flow
generating cycle $C$ and a path $P$ connecting $C$ to $P$. Let  $f^{(i)}$ be the flow supported on $C\cup P$ such that
$\nabla f^{(i)}_i=b_i$ and $\nabla f^{(i)}_j=0$ for $j\neq
i$. 
We then define $\fg :=\sum_{i\in \Vg\cap \VP} f^{(i)}$.

Since $\Vng := (V \setminus \Vg) \cup \{t\}$ contains no flow generating cycles, we can apply the algorithm described earlier to determine 
whether $\hat{\cI} := \cI[\Vng]$ is feasible, and if it is, to determine an essentially optimal fitting pair $(\hat{f}, \hat{\mu})$.
If $\hat{\cI}$ is not feasible, then (since $\delta^-(W) = \emptyset$) $\cI$ is not feasible either.
If $\hat{\cI}$ is feasible and $t \in \Vg$, then $\cI$ is feasible and unbounded.
So consider the final case, that $\hat{\cI}$ is feasible and $t \notin \Vg$.

We construct an optimal primal-dual pair $(f^*,\mu^*)$ to $\mathcal{I}$ by combining $(\hat{f}, \hat{\mu})$ and $\fg$, as follows. 
Let $f^*_e:=\fg_e$ for
$e\in E[\Vg]$, $f^*_e:=\hat{f}_e$ for $e\in E[\Vng]$, and $f^*_e:=0$ otherwise. 
Let $\mu^*_i:=\hat{\mu}_i$ if $i$ can reach $t$ in $E_{f^*}$, and
$\mu^*_i=\infty$ otherwise (this will include all nodes in $\Vg$).
Then $f^*$ and $\mu^*$ are both feasible, and satisfy complementary slackness; thus $(f^*, \mu^*)$ is an optimal solution.

\paragraph{Running time.} 
The time to compute the set of flooded nodes $\Vg$ is dominated by
$O(n)$ negative cycle detections, yielding a total running time of $O(n^2m)$. 
Computing an essentially optimal solution to an instance without flow generating cycles 
requires two calls to Algorithm~\ref{alg:roundmaxflow}, 
as well as a depth-first search, an application Dijkstra's algorithm,
and determining $O(n)$ path flow values. 
The time of $O(mn(m+n\log n)\log(n^2/m))$ required for the calls to Algorithm~\ref{alg:roundmaxflow} clearly dominates.

\paragraph{Encoding length.} It is straightforward to see that the labels $\mut$
and $\mur$ and the flows $\fg$, $\ft$ and $\fr$ have polynomially bounded encoding length in the
size of the input. 
Since Algorithm~\ref{alg:roundmaxflow} is in PSPACE, it follows that $(\bar f, \bar \mu)$ and hence also $(f^*, \mu^*)$ have polynomially bounded encoding length.

The proof of Theorem~\ref{thm:main} is now complete.





\section{Conclusions and future research}\label{sec:conclusion}

\nnote{We should agree on whether a hyphen or not in minimum-cost (currently we have it differently in the intro). Somewhat prefer without, since I'd say we see "minimum cost flow" much more often than "minimum-cost flow". But either is fine.}

A next milestone towards a strongly polynomial algorithm for general linear programming
is to find a strongly polynomial algorithm for the minimum cost generalized flow problem.
Here, each arc $e$ has in addition a cost $c_e$ per unit of incoming flow;
there is no sink, and instead one wishes for a feasible flow satisfying all demands and of minimum cost.
Generalized flow maximization is a special case of this, either indirectly as solving the feasibility question, 
or directly by setting all costs to be zero except for a negative cost flow-absorbing loop at the sink.

One major apparent obstacle when moving from flow maximization to cost minimization is that the notion of relabelling does not seem to have an analogue for minimum cost generalized flow. For the flow maximization problem, the dual constraints are $\gamma_{ij} y_j-y_i\le 0$ for every edge $(i,j)$, with the convention \modified{$y_t=1$; recall that the labels are the inverses of the dual variables. }These inequalities can be conveniently transformed into the multiplicative form 
\eqref{dual}, naturally giving rise to the notion of fitting pairs, and enabling to work with regular flows when restricted to the set of tight arcs. No such transformation is possible for the more general constraints $\gamma_{ij} y_j-y_i\le c_{ij}$. 
\nnote{Shouldn't it be $y_t = 1$? Should we say that the $y_j$'s are the inverses of the labels?}

As discussed in the introduction, minimum cost generalized flow includes as special cases both generalized flow maximization, and the feasibility problem for a linear program whose constraint matrix has at most two nonzero entries per row. One would expect that a strongly polynomial algorithm for minimum cost generalized flow will treat the primal and dual sides of the problem in an integrated framework. Currently, there is a significant difference between the algorithms for the two sides. For the dual problem, Megiddo's algorithm \cite{megiddo83} heavily relies on parametric search, and this has been a key ingredient also in subsequent algorithms for the problem, e.g., \cite{CM94}.  Therefore, new 
approaches for the dual problem may be necessary to tackle the minimum cost generalized flow problem.

Another possible extension would be to obtain strongly polynomial algorithms for more general problem variants, such as maximum generalized flows with concave gains, studied in \cite{Shigeno06,Vegh11}. Whereas one cannot hope for strongly polynomial algorithms for the general model---the optimal solution may not even be rational---it might still be worth exploring tractable special cases, in the spirit of \cite{Vegh-sep} for minimum cost regular flows with separable convex objectives.

\lnote{wrote it and then noticed that most points were already made in my previous paper :) not sure if this is a problem.}



\paragraph*{Acknowledgements.}
We are very grateful to Jos\'e Correa and Andreas Schulz for many interesting discussions which led to this work.
We thank the anonymous referees for their detailed and constructive comments that helped improve the exposition.
Part of this work was done while the authors were participating in the Hausdorff Trimester on Discrete Mathematics in Fall 2015.

\bibliographystyle{abbrv}
\bibliography{strongly}

\end{document}